\tikzset{
>=stealth',
help lines/.style={dashed, thick},
axis/.style={<->},
important line/.style={thick},
connection/.style={thick, dotted},
}
\definecolor{cornellred}{RGB}{179,27,27} 
\definecolor{cornellblue}{RGB}{00,00,170}
\definecolor{cornellgrey}{RGB}{96,94,92}
\newtheoremstyle{myplain}
  {9pt}
  {9pt}
  {\itshape}
  {\parindent}
  {\scshape}
  {:}
  {.5em}
  {}
\newtheoremstyle{mydefinition}
  {9pt}
  {9pt}
  {\itshape}
  {\parindent}
  {\scshape}
  {:}
  {.5em}
  {}
\newtheoremstyle{myremark}
  {9pt}
  {9pt}
  {}
  {\parindent}
  {\scshape}
  {:}
  {.5em}
  {}
\theoremstyle{myplain}
\newtheorem{theorem}{Theorem}[section]
\newtheorem{lemma}{Lemma}[section]
\theoremstyle{mydefinition}
\newtheorem{assumption}{Assumption}
\newenvironment{customassumption}[1]
  {\innercustomassumption}
  {\endinnercustomassumption}
\theoremstyle{myremark}
\newtheorem{remark}{Remark}[section]
\renewcommand{\cite}{\citet}
\newcommand{\R}{\mathbb{R}}
\newcommand{\cJ}{\mathcal{J}}
\newcommand{\cL}{\mathcal{L}}
\newcommand{\cT}{\mathcal{T}}
\def\centerarc[#1](#2)(#3:#4:#5){ \draw[#1] ($(#2)+({#5*cos(#3)},{#5*sin(#3)})$) arc (#3:#4:#5);}
\numberwithin{equation}{section}
\begin{document}

\title{Constraint Qualifications in Partial Identification\thanks{We are grateful to Ivan Canay for conversations that helped bring this paper into focus and to three anonymous referees as well as the editor and co-editor for extremely careful readings of the manuscript. We also thank Isaiah Andrews, Lixiong Li, and seminar audiences at the Bristol Econometrics Study Group, Bristol/Warwick joint seminar, Columbia, and Duke for their feedback. Any and all errors are our own. We gratefully acknowledge financial support through NSF Grants SES-1824344 and SES-2018498 (Kaido) as well as SES-1824375 (Molinari and Stoye).}}
\date{\today}
\author{Hiroaki Kaido\thanks{Department of Economics, Boston University, hkaido@bu.edu.} 
\and Francesca Molinari\thanks{Department of Economics, 
Cornell University, fm72@cornell.edu.}
\and J\"{o}rg Stoye\thanks{Department of Economics, 
Cornell University, stoye@cornell.edu.}}

\maketitle
\begin{abstract}
The literature on stochastic programming typically restricts attention to problems that fulfill constraint qualifications. The literature on estimation and inference under partial identification frequently restricts the geometry of identified sets with diverse high-level assumptions. These superficially appear to be different approaches to closely related problems. We extensively analyze their relation. Among other things, we show that for partial identification through pure moment inequalities, numerous assumptions from the literature essentially coincide with the Mangasarian-Fromowitz constraint qualification. This clarifies the relation between well-known contributions, including within econometrics, and elucidates stringency, as well as ease of verification, of some high-level assumptions in seminal papers.      
\end{abstract}

\vfill

\pagebreak
\onehalfspacing

\section{Introduction}
\label{sec:introduction}

This paper connects two related but largely separate literatures, namely statistical analysis of stochastic programs and estimation and inference for (functions of) partially identified parameters. The bounds that are pervasive in the latter literature are often expressed as values of constrained optimization problems. As such, some similarity to stochastic programming is rather apparent and has been observed before. However, we uncover much deeper connections between these literatures.

Our discussion starts from the econometrics literature. In a seminal paper, \cite[][CHT henceforth]{Chernozhukov_Hong_etal2007aE} provide a comprehensive analysis of consistency of criterion function-based set estimators and their convergence rates in partially identified models.
Their work highlights the challenges a researcher faces in this context and puts forward possible solutions in the form of assumptions under which specific rates of convergence attain.
While these assumptions can be dispensed with when the researcher's goal is to obtain a confidence set for the partially identified parameter vector that is --pointwise or uniformly-- consistent in level (e.g., \cite{AndrewsSoares2010E}), related assumptions reappear when the aim is to obtain a confidence interval for a smooth function of the partially identified parameter vector that is --pointwise or uniformly-- consistent in level (e.g., \cite[][PPHI henceforth]{PakesPorterHo2011}, \cite[][BCS henceforth]{BCS14_subv}).\footnote{We cite \cite{PakesPorterHo2011} because the published version \citep{PPHI_ECMA} does not contain the inference procedure. However, this procedure has been used in influential papers \citep{Eizenberg,HoPakes14,Holmes11}.} Some more recent contributions \citep{ChoRussell,Gafarov} observe a connection to stochastic programming and show that inference becomes much more tractable under the so-called Linear Independence constraint qualification. Some obvious questions arise: How do all these assumptions relate? What is the trade-off between them and possibly other assumptions from the statistics literature?

For a sense of why consistent estimation of identified sets or their projections can be hard even in otherwise well-behaved moment inequality settings, consider Figure \ref{fig:support point}. Both panels illustrate a detail of an \textit{identified set} $\Theta_I$ defined as the collection of parameter values satisfying a finite number of moment inequalities (as in \eqref{def_theta_i} below, but without equality constraints). Each restriction is represented by a curve and excludes everything above that curve. The resulting $\Theta_I$ is shaded.\footnote{An exact algebraic description of the example in Figure \ref{fig:support point} is as follows. Let $\Theta=\mathbb{R}^2$ with typical element $\theta=(\theta_1,\theta_2)$ and let $\Theta_I=\{\theta:(\theta_2-1)^3+\theta_1E(X_1)\leq 0,(\theta_2-1)^3-\theta_1E(X_1)\leq 0\}$ for the left panel; for the right panel, additionally require $\theta_2\leq E(X_3)$, where $E(X_1)=E(X_2)=E(X_3)=1$. Looking ahead, the right panel of Figure \ref{fig:cones} and both panels of Figure \ref{fig:examples2} are qualitative representations.} Suppose now that a researcher wants to find either (i) a set estimator of $\Theta_I$ that is consistent in Hausdorff distance (defined later) or (ii) an estimator of a linear projection of $\Theta_I$ (e.g., the identified set for a component of $\theta$), represented through the support function
\begin{equation}
s(p,\Theta_I) \equiv \max\{p'\theta:\theta \in \Theta_I\} \label{eq:define_s}
\end{equation}
in pre-determined direction $p$. (In the figure, $p=(0,1)$, i.e. we maximize $\theta_2$.) Even if all constraints' graphs can be estimated at a specific --e.g., parametric-- rate, it does not follow that their intersection estimates either object of interest at the same or indeed at any rate. For example, if estimators approximate the true constraints from below, the support function may be underestimated including in the limit.

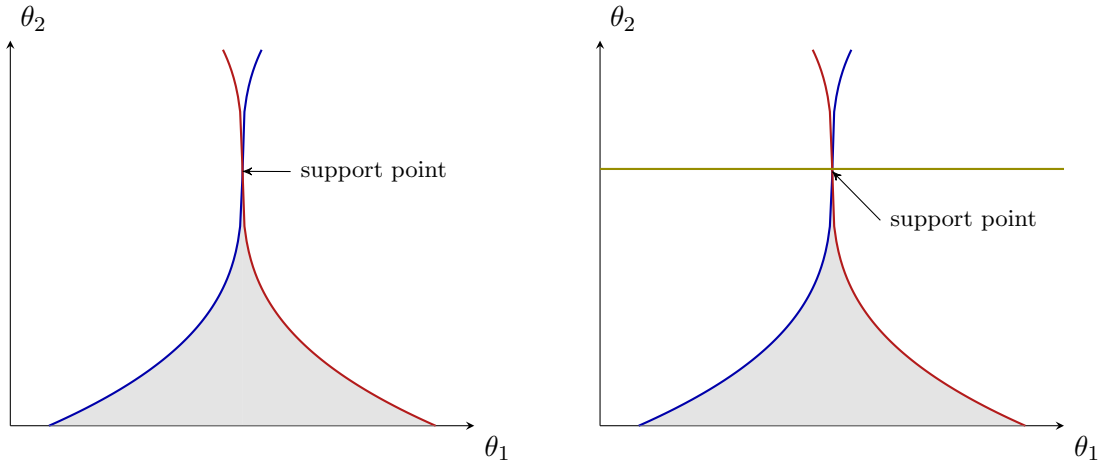
\begin{figure}[t]
		\begin{subfigure}[b]{.5\linewidth}
			\begin{adjustbox}{max width=\textwidth}
  \begin{tikzpicture}[scale=.9]
    \begin{axis}[ticks=none,
            axis lines=middle,clip=false,disabledatascaling,
            xmin=-1.2,xmax=1.2,ymin=0,ymax=1.5,
            axis lines=left,
            xlabel=$\theta_1$,
            ylabel=$\theta_2$,
            xlabel style={at={(ticklabel* cs:1)},anchor=north west},
             ylabel style={at={(ticklabel* cs:1)},anchor=south west},
            ]
      \addplot[name path=U,domain=-1:.1,cornellblue,samples=100,thick] {x/abs(x)*abs(x)^(1/3)+1};
      \addplot[name path=L,domain=-.1:1,cornellred,samples=100,thick] {-x/abs(x)*abs(x)^(1/3)+1} ;
      \addplot[name path=B,domain=-1:1,darkgray!35,opacity=.4,samples=100] {0} ;
      \path [
            name intersections={
                of=U and L,
                by={H},
            },
        ];
        \addplot[fill=darkgray!35,opacity=.4] fill between[of=U and B,soft clip={domain=-1:0}];
      \addplot[fill=darkgray!35,opacity=.4] fill between[of=L and B,soft clip={domain=0:1}];
         \draw[<-] (H) -- (0.25,.99) node[anchor=west]{\footnotesize{support point}};
             \end{axis}
  \end{tikzpicture}
\end{adjustbox}
		\end{subfigure}
		\begin{subfigure}[b]{.5\linewidth}
		\begin{adjustbox}{max width=\textwidth}
 \begin{tikzpicture}[scale=.9]
    \begin{axis}[ticks=none,
            axis lines=middle,clip=false,disabledatascaling,
            xmin=-1.2,xmax=1.2,ymin=0,ymax=1.5,
            axis lines=left,
            xlabel=$\theta_1$,
            ylabel=$\theta_2$,
            xlabel style={at={(ticklabel* cs:1)},anchor=north west},
             ylabel style={at={(ticklabel* cs:1)},anchor=south west},
            ]
      \addplot[name path=U,domain=-1:0.1,cornellblue,samples=100,thick] {x/abs(x)*abs(x)^(1/3)+1};
      \addplot[name path=L,domain=-0.1:1,cornellred,samples=100,thick] {-x/abs(x)*abs(x)^(1/3)+1} ;
      \addplot[domain=-1.2:1.2,olive,samples=100,thick] {1};
      \addplot[name path=B,domain=-1:1,darkgray!35,opacity=.4,samples=100] {0} ;
      \path [
            name intersections={
                of=U and L,
                by={H},
            },
        ];
     \addplot[fill=darkgray!35,opacity=.4] fill between[of=U and B,soft clip={domain=-1:0}];
      \addplot[fill=darkgray!35,opacity=.4] fill between[of=L and B,soft clip={domain=0:1}];
         \draw[<-] (H) -- (0.25,.80) node[anchor=west]{\footnotesize{support point}};
      \end{axis}
  \end{tikzpicture}
\end{adjustbox}
\end{subfigure}
\caption{Two examples of irregular support points. Each curve represents an inequality constraint that excludes everything above it, so that $\Theta_I$ is the shaded region. Estimation of $\Theta_I$ or its support function in direction $(0,1)$ (``up") will be difficult. Note that individual constraints may be well-behaved. The right panel differs from the left one by the addition of a constraint which ensures that a polynomial minorant condition holds. This illustrates that such a condition does not rule out the problem.}
\label{fig:support point}
\end{figure}

The examples may appear ``knife-edge." However, note that: (i) While we will, in this paper, take a pointwise perspective to simplify the analysis, the literature on partial identification is usually concerned with inference that is uniformly valid \textit{near} such irregular cases because asymptotic approximations may otherwise be misleading. Indeed, this is emphasized in the abstract of \cite{CS17}. (ii) Inference methods that are uniformly valid typically use ``Generalized Moment Selection" methods (see again \cite{CS17} for details) that account for statistical uncertainty not only of moment conditions that are violated in sample, but also of ones that are local-to-binding. In these methods, the ``overidentified" feature of the right-hand panel, i.e. the intersection of more than $d$ constraints at one point in $\mathbb{R}^d$, becomes typical of bootstrap d.g.p.'s. In addition, this feature characterizes the boundary case of overidentifying moment conditions, though some assumptions discussed below will exclude that case anyway.  

A reader familiar with constraint qualifications in optimization problems will recognize that both panels of Figure \ref{fig:support point} violate some of these qualifications. Conversely, a reader who is very familiar with the partial identification literature might recognize that they violate assumptions in CHT, PPHI, and elsewhere. We ask if this reflects deeper relations between these bodies of literature. The answer will be affirmative: Under a background assumption of continuous differentiability of moment conditions and abstracting from details of ``how uniformly" the assumptions are stated, the literature on partial identification already invokes constraint qualifications; for examples, we show that both papers just cited rely on the Mangasarian-Fromowitz constraint qualification. This implies some previously unrecognized (to our knowledge) logical relations between assumptions made in econometrics.

Some references to the literatures that we connect are as follows. \cite{MolinariHOE} gives a current overview over the field of partial identification. \cite{CS17} provide a definitive treatment of the literature on moment inequalities. We will define constraint qualifications below but refer to \cite{baz:she:she06} for a textbook treatment and to \cite{BonnansShapiroBook} for a textbook on perturbation analysis of stochastic programs.
   
\section{Assumptions} \label{sec:assumptions}
This section clarifies the general setup and introduces a broad array of assumptions from the literature. For the reader's convenience, these assumptions, and some results that we will explain later, are summarized in Table \ref{tab:assumptions}.

We assume throughout that the model is correctly specified and that individual moment conditions are well-behaved. Specifically, the identified set $\Theta_I$ is characterized by $J$ constraints, namely $J_1\le J$ moment inequalities and $J_2\equiv J-J_1$ moment equalities:
\begin{equation}
\Theta_I \equiv \{\theta \in \Theta: E(m_j(X,\theta))\le 0,~j=1,\dots,J_1;E(m_j(X,\theta))= 0,~j=J_1+1,\dots,J\}, \label{def_theta_i}
\end{equation}
where the functions $(m_1(\cdot),\dots,m_J(\cdot))$ are known up to $\theta \in \Theta$. Then we impose:
\begin{assumption} \label{as:momP_AS}

(a) $\Theta\subset\mathbb{\ R}^d$ is compact convex with nonempty interior. 

(b) $\Theta_I \neq \emptyset$ and $\Theta_I \subset \operatorname{int}(\Theta)$.

(c) $\sigma^2_j(\theta)\equiv \operatorname{Var}(m_j(X,\theta)) \in(0,\infty)$ for $j=1,\dots, J$.

(d)  The gradients $D_j(\cdot)\equiv\nabla_\theta\{E[m_j(X,\cdot)]/\sigma_j(\cdot)\},j=1,\dots,J$ exist and are continuous.
\end{assumption}
These assumptions are standard in the literature. The requirement that $\Theta_I \subset \operatorname{int}(\Theta)$ may appear stronger than the comparable one in CHT, i.e. their condition M2. However, the latter imposes continuous differentiability of moment conditions on a small enlargement of $\Theta$, so it does constrain $\Theta_I$ to be interior to the set on which local linear approximation of moment conditions is valid. This is how we use the assumption, and we could analogously weaken it.  

We next state numerous assumptions that are inspired by the aforementioned literature in econometrics. We first state them in a way that maximizes resemblance to the original formulation, subject to the unification that assumptions are stated pointwise (not uniformly) over d.g.p.'s and that their local implications near extreme points of $\Theta_I$ are extracted. Universal constants invoked in assumptions need not take the same value across appearances.

Define the support set of $\Theta_I$ as
$$ S(p,\Theta_I) \equiv  \{\theta \in \Theta_I: p'\theta=s(p,\Theta_I)\},$$
where the support function $s(\cdot)$ is defined in \eqref{eq:define_s}, and the supporting hyperplane as
$$ H(p,\Theta_I) \equiv  \{\theta \in \Theta: p'\theta=s(p,\Theta_I)\}.$$
Any element of $S(p,\Theta_I)$ is also called a support point. We use $\theta^*$ to denote a generic support point; to economize on subscripts and because we consider $p$ fixed, we suppress dependence of $\theta^*$ on $p$. Recall also that a constraint is \textit{active} at $\theta$ if $E(m_j(X,\theta))=0$. Let
\begin{equation}
\cJ_1^*(\theta)\equiv \left\{j \in \{1,\dots,J_1\}:E(m_j(X,\theta))=0\right\}\label{eq:J1star}
\end{equation}
denote the set of active inequality constraints and
\begin{equation}
\cJ^*(\theta)\equiv \left\{j:E(m_j(X,\theta))=0\right\}=\cJ_1^*(\theta) \cup \{J_1+1,\dots,J\}\label{eq:Jstar}
\end{equation}
the full \textit{active set} of (equality or inequality) constraints at $\theta$.

We first adapt two assumptions, ``Degeneracy" and ``Polynomial Minorant," from CHT. These are essential for getting rate results for consistency of analog estimators of $\Theta_I$; in particular, Polynomial Minorant ensures rate results for a relaxed sample analog of $\Theta_I$, and Degeneracy allows one to drop the relaxation. We weaken the assumptions insofar as they are only imposed at support points. Also, we do not adapt the high-level Conditions C.2 and C.3 from CHT because, being about sample objects, they restrict the sampling process and not just population moments. To keep these issues separate, we focus on the sufficient conditions that stand in for the assumptions in a moment inequalities setting (i.e., CHT's displays 4.5 and 4.6).

\begin{assumption}\label{as:CHT-degeneracy}

For each support point $\theta^*\in S(p,\Theta_I)$, there exist constants $\delta>0$, $\eta>0$, $M>0$, and $C>0$ s.t.
\begin{gather}
\max_{j=1,\dots,J} E(m_j(X,\theta))/\sigma_j(\theta) \leq -C \varepsilon ~~ \text{for all} ~ \theta \in \Theta_I^{-\varepsilon} \cap B(\theta^*,\eta) \label{eq:CHT-deg-1} \\
\max_{\theta \in \Theta_I \cap B(\theta^*,\eta)} d(\theta,\Theta_I^{-\varepsilon}) \leq M\varepsilon ~~ \text{for all} ~ \varepsilon \in [0,\delta], \label{eq:CHT-deg-2}
\end{gather}
where $\Theta_I^{-\varepsilon}=\{\theta \in \Theta_I:d(\theta,\Theta \setminus \Theta_I) \geq \varepsilon\}$ and $B(\theta^*,\eta)\equiv \{\theta \in \Theta: \Vert \theta-\theta^* \Vert  \leq \eta \}$.
\end{assumption}
This ``degeneracy" assumption ensures that for any support point $\theta^*$, there exists a nearby point where moment inequalities hold with slack and whose membership in $\Theta_I$ is therefore easy to determine. In particular, we show in the proof of Theorem \ref{thm:main} that Assumption \ref{as:CHT-degeneracy} precludes the existence of equality constraints.

In the original version, the equivalent of \eqref{eq:CHT-deg-2} was stated using Hausdorff distance, i.e. $d_H(\Theta_I,\Theta_I^{-\varepsilon}) \leq M\varepsilon$, where $d_H(A,B)=\max\{\max_{a \in A} d(a,B),\max_{b \in B} d(b,A)\}$ for generic sets $A,B$ with typical elements $a,b$. However, $\max_{\theta \in \Theta_I^{-\varepsilon}} d(\theta,\Theta_I)=0$ because $\Theta_I^{-\varepsilon} \subset \Theta_I$, so here and in the original, only the implied restriction on $\max_{\theta \in \Theta_I} d(\theta,\Theta_I^{-\varepsilon})$ is nonvacuous. We localize it by restricting $\theta$ to a neighborhood of the support point.

\begin{assumption}\label{as:CHT-PM}

For each support point $\theta^*\in S(p,\Theta_I)$, there exist constants $\eta>0$, $c>0$, and $C>0$ such that,  for all $\theta \in  B(\theta^*,\eta)$,
$$ \max\bigl\{0,\max_{j=1,\dots,J_1}E(m_j(X,\theta))/\sigma_j(\theta),\max_{j=J_1+1,\dots,J}\vert E(m_j(X,\theta))/\sigma_j(\theta)\vert\bigr\} \geq C \times \min\{d(\theta,\Theta_I),c\}. $$
\end{assumption}
This ``minorant" assumption ensures that the population criterion increases not too slowly as one moves away from $\Theta_I$. Loosely speaking, it prevents ``weak identification" problems.
Next, BCS impose a polynomial minorant condition as well.\footnote{In the respective originals, both assumptions raise the r.h.s. to a power $\gamma>0$, hence ``polynomial minorant." However, in both cases, $\gamma$ is restricted in ways that imply $\gamma=1$ in the present setting.}  

\begin{assumption}\label{as:BCS-PM}

For each support point $\theta^*\in S(p,\Theta_I)$, there exist constants $\eta>0$, $c>0$, and $C>0$ such that, for all $\theta \in  B(\theta^*,\eta) \cap H(p,\Theta_I)$,  
$$ \max\bigl\{0,\max_{j=1,\dots,J_1}E(m_j(X,\theta))/\sigma_j(\theta),\max_{j=J_1+1,\dots,J}\vert E(m_j(X,\theta))/\sigma_j(\theta)\vert\bigr\}\geq C \times \min\{d(\theta,S(p,\Theta_I)),c\} . $$
\end{assumption}
It bears emphasis that the last two assumptions are logically independent. Assumption \ref{as:CHT-PM} forces the population criterion to increase (to first order) as we move away from $\Theta_I$. Assumption \ref{as:BCS-PM} enforces an analogous increase as we move away from the support set $S(p,\Theta_I)$ \textit{along the supporting hyperplane}. This does not imply Assumption \ref{as:CHT-PM} because it only applies as one leaves $\Theta_I$ in selective directions. But it also is not implied because the directions considered in Assumption \ref{as:BCS-PM} may be tangential to $\Theta_I$, in which case the increase in $d(\theta,\Theta_I)$ may be of low order. Indeed, Assumption \ref{as:BCS-PM} may be considered restrictive: By not allowing directions to be tangential to $\Theta_I$ without also being tangential to $S(p,\Theta_I)$, it excludes smooth maxima, e.g. any identified set whose entire boundary is a smooth manifold.\footnote{Consider the single moment inequality $\theta_1^2+\theta_2^2-E(X) \leq 0$, with $E(X)=1$ and $p=(0,1)$. Then $S(p,\Theta_I)$ is a singleton at $\theta^*=(0,1)$. Let $\theta=(\zeta,1)$, then $\theta_1^2+\theta_2^2-E(X)=\zeta^2$, whereas $d(\theta,S(p,\Theta_I))=\zeta$, violating Assumption \ref{as:BCS-PM} as $\zeta \to 0$; at the same limit, we have $d(\theta,\Theta_I)=\sqrt{1+\zeta^2}-1>\zeta^2/4$, so that Assumption \ref{as:CHT-PM} holds. We note that this and other easy counterexamples to Assumption \ref{as:BCS-PM} do not seem to be counterexamples to the BCS profiling method. We leave to future research the question whether this illustrates the possibility of relaxing their assumptions.}

We next adapt (in this order) assumptions 3, 4(a), and 4(b) from PPHI. These are modified in a few ways: PPHI assume that the support point $\theta^*$ is unique; this assumption is removed. Assumptions are also localized by only looking at $\theta$ near $\theta^*$; this makes the first of them meaningfully weaker. At the same time, PPHI impose assumptions uniformly over d.g.p.'s; to keep this paper focused, such uniform statements are removed throughout.\footnote{To compare the statements of assumptions, also keep in mind notational conventions: PPHI write moment conditions as $E(m_j(\cdot))\geq 0$, set $p=(-1,0,\dots,0)$, and use $1$-subscripts to denote first components of vectors, so that $t_1$ there would be $-p't$ here.}
\begin{assumption}\label{as:PPHI-cone}

For each support point $\theta^*\in S(p,\Theta_I)$, there exist $\delta,\eta>0$ s.t. $\sup_{t \in T(\theta^*,\eta)}p't \leq -\delta$, where $$T(\theta^*,\eta) \equiv \left\{t=\frac{\theta-\theta^*}{\Vert \theta-\theta^*\Vert}:\theta \in \Theta_I \cap B(\theta^*,\eta),\theta \neq \theta^* \right\}.$$ 
\end{assumption}

\begin{assumption}\label{as:PPHI-descent}

There are no equality constraints, i.e., $J_1=J$. Furthermore, for each support point $\theta^*\in S(p,\Theta_I)$, there exist constants $\delta,\varepsilon>0$ as well as direction (i.e. unit vector) $t \in \R^d$ s.t. $$\max_{j:E(m_j(X,\theta^*))/\sigma_j(\theta^*) \geq -\delta}D_j(\theta^*)t \leq -\varepsilon.$$
\end{assumption}

\begin{assumption}\label{as:PPHI-ascent}

For each support point $\theta^*\in S(p,\Theta_I)$, there exist constants $\delta<0$, $\varepsilon>0$ s.t. $$ \inf_{t \in \mathbb{R}^d,\Vert t \Vert=1,p't \geq \delta} \max\bigl\{ \max_{j \in \mathcal{J}_1^*(\theta^*)} D_j(\theta^*)t, \max_{j \in \{J_1+1,...J\} } \vert D_j(\theta^*)t\vert \bigr\} > \varepsilon.$$
\end{assumption}

Brief intuitions for these are as follows. Assumption \ref{as:PPHI-cone} ensures that $\Theta_I$ is contained in a cone that has $\theta^*$ as apex and does not otherwise intersect $H(p,\Theta_I)$. In particular, this implies uniqueness of $\theta^*$ (although we will not use this feature) and pointiness of the tangent cone (defined later) at $\theta^*$. Assumption \ref{as:PPHI-descent} ensures that locally to $\theta^*$, there exists a direction in which all moment expectations, hence their maximum, strictly decrease. Note in particular that this excludes equality constraints -- while this implication is not explicit in PPHI, they treat equalities as conjunctions of two inequalities, and the assumption cannot possibly hold for both. PPHI point out that their assumption is inspired by CHT's Degeneracy (i.e., our Assumption \ref{as:CHT-degeneracy}), which also excludes equalities. Indeed, both assumptions force $\Theta_I$ to have an interior, and we will have more to say about their relation later. Assumption \ref{as:PPHI-ascent} enforces that the criterion is strictly increasing in many directions from $\theta^*$, including all that have positive inner product, and some that have  negative inner product, with $p$.\footnote{The restriction $\delta<0$ in Assumption \ref{as:PPHI-ascent} correctly reflects our source. However, we considered the possibility that (in our notation) $\delta>0$ was intended. The assumption then becomes weaker. Specifically, along the lines of our main result below, one can show that it is then equivalent to $\max_{j \in \mathcal{J}^*(\theta^*)} D_j(\theta^*)p >0$ and is implied by Assumption \ref{ACQ}.}

We finally recall some classic constraint qualifications. This requires some standard notation that we will also use later. 
For any $\theta \in \Theta_I$, define the tangent cone
$$ \mathcal{T}(\theta) \equiv \biggl\{t \in \mathbb{R}^d: \exists \{\theta_m\}_{m=1}^{\infty} \subset \Theta_I, \theta_m \to \theta, \lim_{m \to \infty} \frac{\theta_m-\theta}{\Vert \theta_m-\theta \Vert} = \frac{t}{\Vert t \Vert} \biggr\} \cup \{0\} $$
as well as the linearized cone
$$ \mathcal{L}(\theta) \equiv \bigl\{t \in \mathbb{R}^d: D_j(\theta)t \leq 0,j \in \cJ_1^*(\theta);D_j(\theta)t = 0,j \in \{J_1+1,\dots,J\}  \bigr\}. $$
We will only invoke these objects for support points $\theta^* \in S(p,\Theta_I)$. Recall that $\mathcal{T}(\theta) \subseteq \mathcal{L}(\theta)$  \citep[][chapter 5]{baz:she:she06}.
 
Both cones are illustrated in Figure \ref{fig:cones}. They coincide in ``nice" examples like the right panel of Figure \ref{fig:cones} (indeed, this ``niceness" is the Abadie constraint qualification defined below) but they may disagree, as in the left panel which reprises the left panel from Figure \ref{fig:support point}.

\begin{figure}[t]
		\begin{subfigure}[b]{.5\linewidth}
			\begin{adjustbox}{max width=\textwidth}
  \begin{tikzpicture}[scale=.9]
    \begin{axis}[ticks=none,
            axis lines=middle,clip=false,disabledatascaling, 
            xmin=-1.2,xmax=1.2,ymin=0,ymax=1.5,
            axis lines=left,
            xlabel=$\theta_1$,
            ylabel=$\theta_2$,
            xlabel style={at={(ticklabel* cs:1)},anchor=north west},
             ylabel style={at={(ticklabel* cs:1)},anchor=south west},
            ]
             \addplot[thick,orange] coordinates{(0,0) (0,1.47)};
              \addplot[thick,green,dashed] coordinates{(0,0) (0,1)};
      \addplot[name path=U,domain=-1:.1,samples=100,thick] {x/abs(x)*abs(x)^(1/3)+1};
      \addplot[name path=L,domain=-.1:1,samples=100,thick] {-x/abs(x)*abs(x)^(1/3)+1} ;
      \addplot[name path=B,domain=-1:1,darkgray!35,opacity=.4,samples=100] {0} ;
      \path [
            name intersections={
                of=U and L,
                by={H},
            },
        ];
     \addplot[fill=darkgray!35,opacity=.4] fill between[of=U and B,soft clip={domain=-1:0}];
      \addplot[fill=darkgray!35,opacity=.4] fill between[of=L and B,soft clip={domain=0:1}];
      \draw[<-] (H) -- (0.25,.99) node[anchor=west]{\footnotesize{support point}};
              \end{axis}
\end{tikzpicture}
\end{adjustbox}
		\end{subfigure}
		\begin{subfigure}[b]{.5\linewidth}
		\begin{adjustbox}{max width=\textwidth}
 \begin{tikzpicture}[scale=.9]
    \begin{axis}[ticks=none,
            axis lines=middle,clip=false,disabledatascaling, 
            xmin=-1.2,xmax=1.2,ymin=0,ymax=1.5,
            axis lines=left,
            xlabel=$\theta_1$,
            ylabel=$\theta_2$,
            xlabel style={at={(ticklabel* cs:1)},anchor=north west},
             ylabel style={at={(ticklabel* cs:1)},anchor=south west},
            ]
      \addplot[domain=-7/12:7/12,thick,orange] {1-12/7*abs(x)};
      \addplot[domain=-7/12:7/12,thick,green,dashed] {1-12/7*abs(x)};
      \addplot[name path=U,domain=-1:0.2,samples=100,thick] {-1/35+5/7*(x+1.2)^2};
      \addplot[name path=L,domain=-.2:1,samples=100,thick] {-1/35+5/7*(x-1.2)^2};
      \addplot[name path=B,domain=-1:1,darkgray!35,opacity=.4,samples=100] {0} ;
            \path [
            name intersections={
                of=U and L,
                by={H},
            },
        ];
     \addplot[fill=darkgray!35,opacity=.4] fill between[of=U and B,soft clip={domain=-1:0}];
      \addplot[fill=darkgray!35,opacity=.4] fill between[of=L and B,soft clip={domain=0:1}];
     \draw[<-] (H) -- (0.25,1) node[anchor=west]{\footnotesize{support point}};
	 \centerarc[green, thick](-0,1)(240:300:.3);
 \centerarc[orange, thick](-0,1)(240:300:.32);
      \end{axis}
  \end{tikzpicture}
\end{adjustbox}
\end{subfigure}
\caption{An illustration of linear cone (orange) versus tangent cone (green). Left panel: The two disagree in the example from Figure \ref{fig:support point}; note that the linear cone goes both ``up" and ``down" from the support point. Right panel: The two agree in a well-behaved setting that cannot be statistically distinguished from the ill-behaved one.}
\label{fig:cones}
\end{figure}
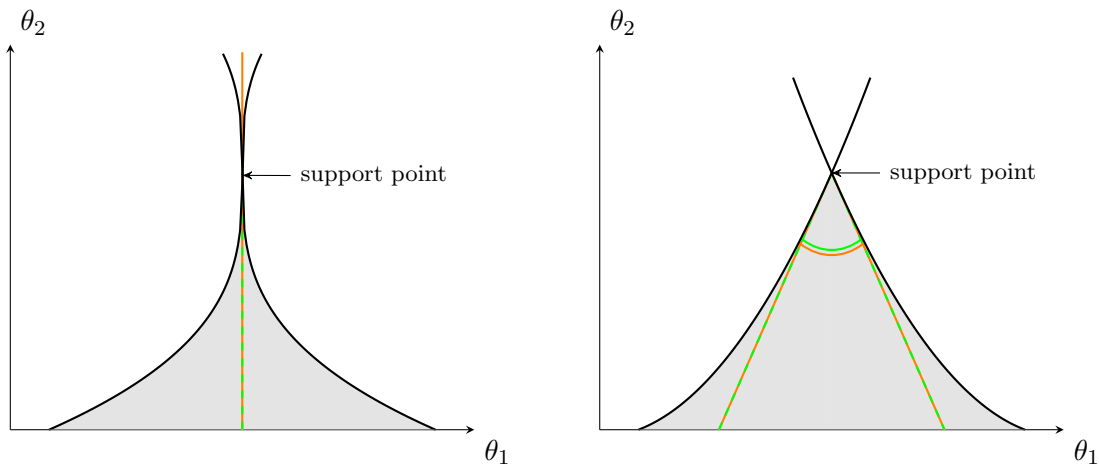

We can then state the following constraint qualifications in decreasing order of restrictiveness. 

\begin{assumption}{\textbf{Linear Independence Constraint Qualification} (LICQ)} \label{LICQ}

For each support point $\theta^* \in S(p,\Theta_I)$, the active (equality or inequality) constraints have linearly independent gradients $D_j(\theta^*)$.
\end{assumption}

Of course, the LICQ requires at most $d$ active constraints, making it quite restrictive.

\begin{assumption}{\textbf{Mangasarian-Fromowitz Constraint Qualification} (MFCQ)} \label{MFCQ}

At each support point $\theta^* \in S(p,\Theta_I)$, the gradients of the equality constraints are linearly independent and there exists $t \in \mathbb{R}^d$ s.t. $D_j(\theta^*)t<0$ for $j \in \cJ_1^*(\theta^*)$ and $D_j(\theta^*)t=0$ for $j \in \{J_1+1,\dots,J\}$.
\end{assumption}
\newpage

\begin{assumption}{\textbf{Abadie Constraint Qualification} (ACQ)}\label{ACQ}

For each support point $\theta^* \in S(p,\Theta_I)$, we have $\mathcal{L}(\theta^*)=\mathcal{T}(\theta^*)$.
\end{assumption}
 
These conditions are frequently invoked in the statistical literature. For example, \cite{Shapiro90,Shapiro1991aAOR,Shapiro93} uses either LICQ or uniqueness of Lagrange multipliers, and these two assumptions are essentially the same \citep{Wachsmuth}. In econometrics, \cite{ChoRussell} and \cite{Kaido:Santos} use LICQ; \cite{andrews_roth_pakes} and \cite{Gafarov} restrict attention to linear constraints and thereby impose ACQ.

\section{Results}

\subsection{Restating Some Assumptions, and a First Set of Equivalences}

We next restate some of these assumptions, exploiting their localization or using the language of constraint qualifications. Between our ``localization" of assumptions and the language of tangent cones, several of the assumptions we just introduced can be stated more succinctly. 
In what follows, recall that $\cJ^*(\theta)$, defined in \eqref{eq:Jstar}, is the active set of (equality or inequality) constraints at $\theta$.
Specifically, define:

\begin{customassumption}{3'}\label{as:CHT-PM_2}
For each support point $\theta^*\in S(p,\Theta_I)$, there exist constants $\eta>0$ and $C>0$ such that,  for all $\theta \in  B(\theta^*,\eta)$,
$$ \max\bigl\{0,\max_{j=1,\dots,J_1}E(m_j(X,\theta))/\sigma_j(\theta),\max_{j=J_1+1,\dots,J}\vert E(m_j(X,\theta))/\sigma_j(\theta)\vert\bigr\} \geq C \times d(\theta,\Theta_I). $$
\end{customassumption}

\begin{customassumption}{4'}\label{as:BCS-PM_2}

For each support point $\theta^*\in S(p,\Theta_I)$, there exist constants $\eta>0$ and $C>0$ such that, for all $\theta \in  B(\theta^*,\eta) \cap H(p,\Theta_I)$,  
$$ \max\bigl\{0,\max_{j=1,\dots,J_1}E(m_j(X,\theta))/\sigma_j(\theta),\max_{j=J_1+1,\dots,J}\vert E(m_j(X,\theta))/\sigma_j(\theta)\vert\bigr\}\geq C \times d(\theta,S(p,\Theta_I)) . $$
\end{customassumption}

\begin{customassumption}{5'} \label{as:PPHI-cone_2}

For each support point $\theta^*\in S(p,\Theta_I)$,
$\max\{p't/\Vert t \Vert: t\in \mathcal{T}(\theta^*) \setminus \{0\}\}<0$.
\end{customassumption}

\begin{customassumption}{6'}\label{as:PPHI-descent_2}

There are no equality constraints. Furthermore, for each support point $\theta^*\in S(p,\Theta_I)$, $\min_{t \in \mathbb{R}^d} \max_{j \in \cJ^*(\theta^*)}D_j(\theta^*)t/\Vert t \Vert <0$.
\end{customassumption}

\begin{customassumption}{7'}\label{as:PPHI-ascent_2}

For each support point $\theta^*\in S(p,\Theta_I)$,
$\max\{p't/\Vert t \Vert: t\in \mathcal{L}(\theta^*) \setminus \{0\}\}<0$.
\end{customassumption}

Then we have:

\begin{lemma}\label{lem:restate}
Suppose that Assumption \ref{as:momP_AS} holds. Then the following assumptions are equivalent: \ref{as:CHT-PM}$\Leftrightarrow$\ref{as:CHT-PM_2}, \ref{as:BCS-PM}$\Leftrightarrow$\ref{as:BCS-PM_2}, \ref{as:PPHI-cone}$\Leftrightarrow$\ref{as:PPHI-cone_2}, \ref{as:PPHI-descent}$\Leftrightarrow$\ref{as:PPHI-descent_2}, and \ref{as:PPHI-ascent}$\Leftrightarrow$\ref{as:PPHI-ascent_2}.
\end{lemma}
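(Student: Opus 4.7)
The plan is to prove each of the five equivalences separately under Assumption \ref{as:momP_AS}; the arguments group naturally into three styles. The first style handles Assumptions \ref{as:CHT-PM} and \ref{as:BCS-PM}. In each case the only difference between the original and restated versions is the outer $\min\{d(\theta,\cdot),c\}$ on the right-hand side. Since $\theta^* \in \Theta_I$ (respectively $\theta^* \in S(p,\Theta_I)$), for any $\theta \in B(\theta^*,\eta)$ we have $d(\theta,\Theta_I) \leq \|\theta-\theta^*\| \leq \eta$ (resp.\ $d(\theta,S(p,\Theta_I)) \leq \eta$). Hence shrinking $\eta$ to be at most the constant $c$ from the original assumption collapses $\min\{d,c\}$ to $d$, yielding \ref{as:CHT-PM_2} or \ref{as:BCS-PM_2}. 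The reverse directions are immediate from $d \geq \min\{d,c\}$.

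The second style handles \ref{as:PPHI-descent} $\Leftrightarrow$ \ref{as:PPHI-descent_2}. Because there are finitely many constraints and the inactive ones satisfy $E(m_j(X,\theta^*))/\sigma_j(\theta^*) < 0$ strictly, the gap $\alpha := \min_{j \notin \mathcal{J}^*(\theta^*)}|E(m_j(X,\theta^*))/\sigma_j(\theta^*)|$ is strictly positive. Taking $\delta \in (0,\alpha)$ in \ref{as:PPHI-descent} makes $\{j: E(m_j)/\sigma_j \geq -\delta\}$ equal to $\mathcal{J}^*(\theta^*)$, so the two statements refer to precisely the same maximum; any unit-normalized witness of the strictly negative min in \ref{as:PPHI-descent_2} provides the required descent direction in \ref{as:PPHI-descent}, and the converse is trivial.

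The third style handles the cone-based pairs \ref{as:PPHI-cone} $\Leftrightarrow$ \ref{as:PPHI-cone_2} and \ref{as:PPHI-ascent} $\Leftrightarrow$ \ref{as:PPHI-ascent_2}, which follow from the definitions of $\mathcal{T}(\theta^*)$ and $\mathcal{L}(\theta^*)$ combined with a standard compactness/contradiction argument. For \ref{as:PPHI-cone} $\Rightarrow$ \ref{as:PPHI-cone_2}, every $t \in \mathcal{T}(\theta^*)\setminus\{0\}$ is a limit of normalized differences that eventually lie in $T(\theta^*,\eta)$, so the bound $p't/\|t\| \leq -\delta$ passes to the limit. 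For the converse, if \ref{as:PPHI-cone} fails one extracts unit vectors $t_n \in T(\theta^*,1/n)$ with $p't_n > -1/n$; a convergent subsequence yields $t^* \in \mathcal{T}(\theta^*)\setminus\{0\}$ with $p't^* \geq 0$, contradicting \ref{as:PPHI-cone_2}. The argument for \ref{as:PPHI-ascent} $\Leftrightarrow$ \ref{as:PPHI-ascent_2} is parallel: any unit $t \in \mathcal{L}(\theta^*)$ makes the $\max$ in \ref{as:PPHI-ascent} nonpositive, so the assumption forces $p't < \delta < 0$, giving \ref{as:PPHI-ascent_2}; conversely, if \ref{as:PPHI-ascent} failed one finds unit $t_n$ with $p't_n \geq -1/n$ and constraint-gradient max $\leq 1/n$, whose subsequential limit $t^*$ lies in $\mathcal{L}(\theta^*)\setminus\{0\}$ with $p't^* \geq 0$, contradicting \ref{as:PPHI-ascent_2}.

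The main subtlety lies in the compactness steps for the cone equivalences: one must verify both that the limit direction $t^*$ is nonzero (immediate from $\|t_n\|=1$) and that all relevant cone-membership inequalities pass to the limit, which relies on continuity of the gradients $D_j$ from Assumption \ref{as:momP_AS}(d) together with the linearity of $t \mapsto D_j(\theta^*)t$. All remaining steps are elementary.
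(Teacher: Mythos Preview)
Your proposal is correct and follows essentially the same approach as the paper's proof: shrink $\eta$ to absorb the constant $c$ for \ref{as:CHT-PM} and \ref{as:BCS-PM}; choose $\delta$ below the slack of inactive constraints for \ref{as:PPHI-descent}; and use a compactness/subsequence contradiction for the two cone equivalences. The only cosmetic difference is that for \ref{as:PPHI-ascent}$\Leftrightarrow$\ref{as:PPHI-ascent_2} the paper routes through the intermediate reformulation \eqref{eq:conetransform} before running the sequence argument, whereas you argue directly; and the paper separates out the empty-cone case for \ref{as:PPHI-cone}, which your contrapositive handles implicitly.
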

\begin{proof}
Regarding Asumptions \ref{as:CHT-PM} and \ref{as:CHT-PM_2}, $\Leftarrow$ is obvious and $\Rightarrow$ holds because in Assumption \ref{as:CHT-PM}, one can choose $\eta=c$, ensuring $\min\{d(\theta,\Theta_I),c\}=d(\theta,\Theta_I)$. The argument for \ref{as:BCS-PM}$\Leftrightarrow$\ref{as:BCS-PM_2} is the same. 

Next, \ref{as:PPHI-cone}$\Rightarrow$\ref{as:PPHI-cone_2} holds because $T(\theta^*,\eta)$ shrinks toward $\cT(\theta^*)\setminus \{0\}$ as $\eta \to 0$. Also, suppose $\cT(\theta^*)\setminus \{0\}=\emptyset$, then Assumption \ref{as:PPHI-cone_2} holds vacuously; but in this case, $T(\theta^*,\eta)=\emptyset$ for small enough $\eta$ and so Assumption \ref{as:PPHI-cone} holds as well. It remains to show that, if $\cT(\theta^*)\setminus \{0\}\neq \emptyset$ and therefore $T(\theta^*,\eta)\neq \emptyset$ for any $\eta$, then failure of Assumption \ref{as:PPHI-cone} implies failure of Assumption \ref{as:PPHI-cone_2}. Now, failure of Assumption \ref{as:PPHI-cone} and nonemptiness of $T(\theta^*,\eta)$ jointly imply existence of sequences $(\delta_n,\eta_n)\downarrow (0,0)$ and $\theta_n \in B(\theta^*,\eta_n) \setminus \{\theta^*\}$ s.t. $p'(\theta_n-\theta^*)/\Vert \theta_n-\theta^* \Vert>-\delta_n$. But then any accumulation point $t$ of $(\theta_n-\theta^*)/\Vert \theta_n-\theta^* \Vert$ is in $\cT(\theta^*)$ and has $p't \geq 0$, contradicting Assumption \ref{as:PPHI-cone_2}.

Assumption \ref{as:PPHI-descent} obviously implies \ref{as:PPHI-descent_2}. To see the reverse implication, suppose \ref{as:PPHI-descent_2} holds, then one can verify Assumption \ref{as:PPHI-descent} by choosing $\delta$ to be half the slack of the tightest inactive inequality (or arbitrarily if all inequalities are active).

Next, note first that
\begin{eqnarray}\label{eq:conetransform}
&& \max\bigl\{p't/\Vert t \Vert : t\in \mathcal{L}(\theta^*) \setminus \{0\}\bigr\}<0 \\
&\Longleftrightarrow &\min_{t \in \mathbb{R}^d:\Vert t \Vert=1,p't \geq 0}\max\bigl\{ \max_{j \in \mathcal{J}_1^*(\theta^*)} D_j(\theta^*)t, \max_{j \in \{J_1+1,\dots, J\} } \vert D_j(\theta^*)t\vert \bigr\} > 0. \notag
\end{eqnarray}
For example, it is easy to see that the above minimum is attained. If it equalled $0$, the vector $t$ attaining it would be in $\mathcal{L}(\theta^*)$, so the first maximum would be at least $0$. The converse argument is similar.

The left-hand side of \eqref{eq:conetransform} is Assumption \ref{as:PPHI-ascent_2}. We next show that its right-hand side is equivalent to Assumption \ref{as:PPHI-ascent}. It is implied by Assumption \ref{as:PPHI-ascent} because the minimization is over a smaller set. To see the converse, suppose Assumption \ref{as:PPHI-ascent} fails, then there exist sequences $\delta_n \uparrow 0$, $\varepsilon_n \downarrow 0$, and $t_n$ with $p't_n \geq \delta_n$ and $\max\bigl\{ \max_{j \in \mathcal{J}_1^*(\theta^*)} D_j(\theta^*)t_n, \max_{j \in \{J_1+1,\dots, J\} } \vert D_j(\theta^*)t_n\vert \bigr\} \leq \varepsilon_n$. Any accumulation point of $t_n$ then is a counterexample to the right-hand side of \eqref{eq:conetransform}. 
\end{proof}
\begin{remark}
Some of these equivalences are due to localization of assumptions. For example, the original Assumption 3 in PPHI, but also both polynomial minorant conditions, are otherwise stronger than their simplifications. Indeed, regarding the equivalences \ref{as:CHT-PM}$\Leftrightarrow$\ref{as:CHT-PM_2} and \ref{as:BCS-PM}$\Leftrightarrow$\ref{as:BCS-PM_2}, the real insight is that the original assumptions combine local and global conditions, e.g. (for Assumption \ref{as:CHT-PM})
\begin{eqnarray}
&& d(\theta,\Theta_I) \leq \delta\label{eq:CHT-orig-local} \\
&\implies & \max\bigl\{0,\max_{j=1,\dots,J_1}E(m_j(X,\theta))/\sigma_j(\theta),\max_{j=J_1+1,\dots,J}\vert E(m_j(X,\theta))/\sigma_j(\theta)\vert\bigr\} \geq C \times d(\theta,\Theta_I) \notag 
\end{eqnarray}
and
\begin{eqnarray}
&& d(\theta,\Theta_I) \geq \delta \label{eq:CHT-orig-global} \\
&\implies & \max\bigl\{0,\max_{j=1,\dots,J_1}E(m_j(X,\theta))/\sigma_j(\theta),\max_{j=J_1+1,\dots,J}\vert E(m_j(X,\theta))/\sigma_j(\theta)\vert\bigr\} \geq \varepsilon \notag
\end{eqnarray}
for some $\varepsilon>0$ (namely, setting $\varepsilon=C\delta$). Only the local condition \eqref{eq:CHT-orig-local} is related to constraint qualifications. The other part is really a global identification condition. We will revisit this distinction later.
\end{remark}

\subsection{Econometric Assumptions as Constraint Qualifications}
\label{subsec:main_theorem}

We now present our main insight: Several of the above assumptions are equivalent, or very close to, constraint qualification assumptions, and there are numerous logical relations between them. Our main result, which is also visualized in Figure \ref{fig:flow chart}, follows.

\begin{theorem} \label{thm:main}
Suppose Assumption \ref{as:momP_AS} holds. Then the following relations between assumptions hold true.
\begin{enumerate}
\item Assumptions \ref{as:CHT-degeneracy} and \ref{as:PPHI-descent} are equivalent. Furthermore, both are equivalent to jointly (i) excluding equality restrictions and (ii) imposing \ref{MFCQ}.
\item Any of Assumptions \ref{as:CHT-degeneracy}, \ref{as:PPHI-descent}, and \ref{MFCQ} (the latter combined with excluding equality constraints) imply \ref{as:CHT-PM}.
\item Assumption \ref{as:CHT-PM} implies \ref{ACQ}.
\item Assumptions \ref{as:PPHI-cone} and \ref{ACQ} jointly imply \ref{as:PPHI-ascent}.
\item Assumption \ref{as:PPHI-ascent} implies \ref{as:PPHI-cone}.
\item Assumption \ref{as:PPHI-ascent} implies that the gradients of active constraints span $\mathbb{R}^d$. In particular, if $\cJ^*(\theta^*)$ has exactly $d$ elements, \ref{LICQ} is implied.    
\item Assumption \ref{as:PPHI-ascent} implies \ref{as:BCS-PM}.
\end{enumerate}
\end{theorem}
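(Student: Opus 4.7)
The plan is to work outward from the core equivalences in Part 1 and the error bound in Parts 2--3, to the quick consequences in Parts 4--7. Throughout I use the reformulations \ref{as:CHT-PM_2}, \ref{as:BCS-PM_2}, \ref{as:PPHI-cone_2}, \ref{as:PPHI-descent_2}, and \ref{as:PPHI-ascent_2} from Lemma \ref{lem:restate}, and I abbreviate $\Psi(\theta) := \max\bigl\{0,\max_{j \le J_1} E(m_j(X,\theta))/\sigma_j(\theta),\max_{j > J_1} \vert E(m_j(X,\theta))/\sigma_j(\theta)\vert\bigr\}$.

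For Part 1, once equality constraints are ruled out, \ref{as:PPHI-descent_2} and \ref{MFCQ} coincide verbatim, so it suffices to connect \ref{as:CHT-degeneracy} with ``no equalities plus \ref{MFCQ}''. First, \ref{as:CHT-degeneracy} forces $J_1 = J$: for any $\theta \in \Theta_I^{-\varepsilon} \subseteq \Theta_I$ and any equality index $j > J_1$, $E(m_j(X,\theta)) = 0$, so \eqref{eq:CHT-deg-1} cannot hold with $\varepsilon > 0$; hence $\Theta_I^{-\varepsilon}$ is empty for small $\varepsilon$, making \eqref{eq:CHT-deg-2} fail since $d(\cdot,\emptyset) = +\infty$. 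Applying \eqref{eq:CHT-deg-2} at $\theta = \theta^*$ extracts $\theta^{(\varepsilon)} \in \Theta_I^{-\varepsilon}$ with $\Vert \theta^{(\varepsilon)} - \theta^*\Vert \le M\varepsilon$; the Taylor expansion $E(m_j(X,\theta^{(\varepsilon)}))/\sigma_j(\theta^{(\varepsilon)}) = D_j(\theta^*)(\theta^{(\varepsilon)} - \theta^*) + o(\varepsilon)$ combined with \eqref{eq:CHT-deg-1} shows that any cluster direction $t$ of $(\theta^{(\varepsilon)} - \theta^*)/\Vert \theta^{(\varepsilon)} - \theta^*\Vert$ satisfies $D_j(\theta^*) t \le -C/M$ for every active $j$, yielding \ref{MFCQ}. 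The reverse direction is where the work lies: starting from an MFCQ direction $t^0$, I would show that for every $\theta \in \Theta_I \cap B(\theta^*,\eta)$ the perturbation $\theta + \varepsilon t^0$ lies in $\Theta_I^{-c\varepsilon}$ for a uniform $c > 0$ and all small enough $\varepsilon$, using $D_j(\theta) t^0 \le -c'$ uniformly near $\theta^*$ on active indices (by continuity of the $D_j$'s) plus slackness of inactive constraints. Simultaneously delivering both \eqref{eq:CHT-deg-1} and \eqref{eq:CHT-deg-2} by this uniform Taylor-perturbation argument is the main obstacle of the theorem.

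Part 2 then follows because \ref{MFCQ} (with no equalities) is a classical sufficient condition for a local Lipschitzian error bound \citep[see][]{BonnansShapiroBook}: there exist $C,\eta > 0$ with $\Psi(\theta) \ge C \, d(\theta, \Theta_I)$ on $B(\theta^*,\eta)$, which is \ref{as:CHT-PM_2}. For Part 3, take any $t \in \mathcal{L}(\theta^*)$ and set $\theta_\varepsilon := \theta^* + \varepsilon t$; a Taylor expansion gives $\Psi(\theta_\varepsilon) = o(\varepsilon)$ (active inequalities have $D_j(\theta^*) t \le 0$, equality constraints have $D_j(\theta^*) t = 0$, inactive constraints are slack), so \ref{as:CHT-PM_2} produces $\tilde\theta_\varepsilon \in \Theta_I$ with $\Vert \tilde\theta_\varepsilon - \theta_\varepsilon\Vert = o(\varepsilon)$. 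As $\varepsilon \to 0$ the normalized direction $(\tilde\theta_\varepsilon - \theta^*)/\Vert\tilde\theta_\varepsilon - \theta^*\Vert$ converges to $t/\Vert t\Vert$, so $t \in \mathcal{T}(\theta^*)$; this proves $\mathcal{L}(\theta^*) \subseteq \mathcal{T}(\theta^*)$, i.e.\ \ref{ACQ}.

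Parts 4--7 are short consequences. Part 4 is immediate from Lemma \ref{lem:restate} because under \ref{ACQ} the supremum defining \ref{as:PPHI-cone_2} equals that defining \ref{as:PPHI-ascent_2}. Part 5 uses $\mathcal{T}(\theta^*) \subseteq \mathcal{L}(\theta^*)$. For Part 6 I argue by contradiction: a nonzero $v$ with $D_j(\theta^*) v = 0$ for all $j \in \cJ^*(\theta^*)$ places $\pm v \in \mathcal{L}(\theta^*)$, and \ref{as:PPHI-ascent_2} then forces both $p'v < 0$ and $-p'v < 0$, which is impossible; when $\vert \cJ^*(\theta^*)\vert = d$, spanning is equivalent to linear independence, so \ref{LICQ} follows. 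For Part 7 I first establish by compactness that $\kappa := \inf\bigl\{\max\{\max_{j \in \cJ_1^*(\theta^*)} D_j(\theta^*) t,\, \max_{j > J_1} \vert D_j(\theta^*) t\vert\} : \Vert t\Vert = 1,\, p't = 0\bigr\} > 0$, since any minimizer $t^*$ attaining $0$ would lie in $\mathcal{L}(\theta^*) \setminus \{0\}$ with $p't^* = 0$, contradicting \ref{as:PPHI-ascent_2}. For $\theta \in B(\theta^*,\eta) \cap H(p,\Theta_I)$ written as $\theta^* + r t$ with $\Vert t\Vert = 1$ and $p't = 0$, a Taylor expansion yields $\Psi(\theta) \ge \kappa r - o(r)$, while $d(\theta, S(p,\Theta_I)) \le \Vert\theta - \theta^*\Vert = r$, which delivers \ref{as:BCS-PM_2} with constant $\kappa/2$ for small enough $\eta$.
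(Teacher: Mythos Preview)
Your proposal is correct and follows essentially the same route as the paper's proof: the same Taylor-plus-cluster-point argument for Part 1, the same linearized-cone-to-tangent-cone construction for Part 3, and identical short arguments for Parts 4--7. The only notable difference is that for Part 2 you invoke a classical Lipschitzian error bound from \cite{BonnansShapiroBook}, whereas the paper proves it directly by perturbing along the MFCQ direction $t^*$ and applying the mean value theorem; both are valid, and the paper's self-contained version is only a few lines.
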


\begin{table}
\begin{tabular}{ll}
\textbf{Assumption} & \textbf{Intuition}  \\
\hline
1 & Background assumption.  \\
2 & All moments strictly negative at some $\tilde{\theta}$ close to $\theta^*$.  \\
3 $\Leftrightarrow$ 3' & Criterion increases with distance from $\Theta_I$.  \\
4 $\Leftrightarrow$ 4' & Criterion increases with distance from support set on support plane.  \\
5 $\Leftrightarrow$ 5' & $\Theta_I$ is locally contained in a pointy cone.  \\
6 $\Leftrightarrow$ 6' & There exists a direction from $\theta^*$ in which every constraint becomes slack.  \\
7 $\Leftrightarrow$ 7' & Criterion increases outside a pointy cone locally containing $\Theta_I$.  \\
8 & Linear Independence Constraint Qualification (LICQ).  \\
9 & Mangasarian-Fromowitz Constraint Qualification (MFCQ).  \\
10 & Abadie Constraint Qualification (ACQ).  \\
\end{tabular}
\caption{Assumptions and their relation as established in Lemma \ref{lem:restate}. The term ``criterion" refers to $\max\bigl\{0,\max_{j=1,\dots,J_1}E(m_j(X,\theta))/\sigma_j(\theta),\max_{j=J_1+1,\dots,J}\vert E(m_j(X,\theta))/\sigma_j(\theta)\vert\bigr\}$ .}
\label{tab:assumptions}
\end{table}

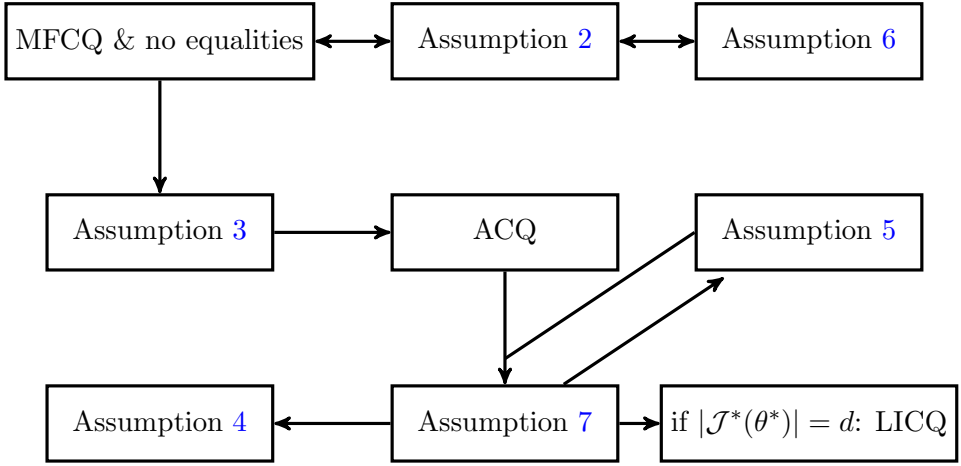
\begin{figure}

\begin{adjustbox}{max width=\textwidth}

\tikzstyle{lemma} = [rectangle, minimum width=3cm, minimum height=1cm,text centered, draw=black, fill=white!100]
\tikzstyle{remark} = [rectangle, minimum width=1cm, minimum height=1cm,text centered, draw=white, fill=white!100]
\tikzstyle{comment} = [rectangle, rounded corners, minimum width=3cm, minimum height=1cm,text centered, draw=black, fill=gray!30]
\tikzstyle{line} = [draw, -latex']

\begin{tikzpicture}[node distance = 2cm, auto]
\linespread{0.9}

\node [lemma,very thick] (CHT-degen) {MFCQ \& no equalities};

\node [lemma, right = 1cm of CHT-degen,very thick] (MFCQ) {Assumption \ref{as:CHT-degeneracy}};

\node [lemma, right = 1cm of MFCQ,very thick] (PPHI-unpointy) {Assumption  \ref{as:PPHI-descent}};

\node [lemma, below =1.5cm of MFCQ,very thick] (ACQ) {ACQ};

\node [lemma, below =1.5cm of CHT-degen,very thick] (CHT-LM) {Assumption  \ref{as:CHT-PM}};

\node [lemma, below =1.5cm of PPHI-unpointy,very thick] (PPHI-pointy) {Assumption  \ref{as:PPHI-cone}};

\node [lemma, below =1.5cm of PPHI-pointy, very thick] (LICQ-d) {if $|\cJ^*(\theta^*)|=d$: LICQ};

\node [lemma, below =1.5cm of ACQ,very thick] (PPHI-ascent) {Assumption \ref{as:PPHI-ascent}};

\node [lemma, below =1.5cm of CHT-LM,very thick] (BCS-LM) {Assumption \ref{as:BCS-PM}};

\draw [<->,very thick] (MFCQ) -- (CHT-degen);

\draw [<->,very thick] (MFCQ) -- (PPHI-unpointy);

\draw [->,very thick] (CHT-degen) -- (CHT-LM);

\draw [->,very thick] (CHT-LM) -- (ACQ);

\draw [->,very thick] (ACQ) -- (PPHI-ascent);

\draw [->,very thick] (PPHI-ascent) -- (LICQ-d);


\draw [<-,very thick] (BCS-LM) -- (PPHI-ascent);

\node [above=.5cm of PPHI-ascent] (helper) {} ;

\draw [->,very thick] (PPHI-ascent) -- +(2.91,1.94);
\draw [very thick] (PPHI-pointy.west) -- +(-2.52,-1.68);

\end{tikzpicture}
\end{adjustbox}
\caption{A flow chart illustrating Theorem \ref{thm:main}. Examples for reading this: ``\ref{as:CHT-PM} implies ACQ"; ``\ref{as:CHT-degeneracy} and \ref{as:PPHI-descent} are equivalent"; ``\ref{as:PPHI-cone} and ACQ jointly imply \ref{as:PPHI-ascent}."}
\label{fig:flow chart}
\end{figure}

\begin{proof}
Throughout this proof, consider a fixed $\theta^*$. We invoke Lemma \ref{lem:restate} and use the simplified versions of the assumptions.

\paragraph{1.}
If equalities are excluded, MFCQ reduces to
\begin{equation} \label{eq:MFCQ-like}
\min_{t \in \mathbb{R}^d: \Vert t \Vert =1} \max_{j \in \cJ^*(\theta^*)} D_j(\theta^*)t < 0.
\end{equation}
This immediately clarifies that excluding inequalities and imposing MFCQ is just Assumption \ref{as:PPHI-descent_2}. It remains to show equivalence with Assumption \ref{as:CHT-degeneracy}. 

Assumption \ref{as:CHT-degeneracy} excludes equalities because the presence of even one equality constraint implies that $\max_{j=1,\dots,J} E(m_j(X,\theta))/\sigma_j(\theta)=0$ for all $\theta \in \Theta_I$. This leaves two possibilities: Either $\Theta_I^{-\varepsilon} \neq \emptyset$, in which case \eqref{eq:CHT-deg-1} fails, or $\Theta_I^{-\varepsilon} = \emptyset$, in which case $\eqref{eq:CHT-deg-2}$ fails because $d(\theta,\Theta_I^{-\varepsilon})=\infty$.

To see that Assumption \ref{as:CHT-degeneracy} also implies \eqref{eq:MFCQ-like}, consider a sequence $\varepsilon_n \to 0$. For $n$ large enough we have $\varepsilon_n /M \leq \delta$, where $M$ and $\delta$ are from Assumption \ref{as:CHT-degeneracy}. Then by \eqref{eq:CHT-deg-2} there exists $\theta_n \in \Theta_I^{-\varepsilon_n/M}$ with $\Vert \theta_n-\theta^* \Vert \leq \varepsilon_n$, and by \eqref{eq:CHT-deg-1}, we have $\max_{j \in \cJ^*(\theta^*)} \{E(m_j(X,\theta_n))/\sigma_j(\theta_n)\} \leq -C \varepsilon_n/M$. Next, let $t$ be any accumulation point of $(\theta_n-\theta^*) / \Vert \theta_n-\theta^* \Vert$, then by continuous differentiability one has $\max_{j \in \cJ^*(\theta^*)} D_j(\theta^*)t \leq -C/M<0$.

To see the converse, let
\begin{equation}\label{eq:def_mu}
t^* = \arg\min_{t \in \mathbb{R}^d: \Vert t \Vert =1} \max_{j \in \cJ^*(\theta^*)} D_j(\theta^*)t
\end{equation}
and let $\mu=\tfrac{1}{2}\vert \max_{j \in \cJ^*(\theta^*)} D_j(\theta^*)t^* \vert$. By \eqref{eq:MFCQ-like}, $\max_{j \in \cJ^*(\theta^*)} D_j(\theta^*)t^*=-2\mu<0$.

We next argue why inactive constraints, i.e. $j \notin \cJ^*(\theta^*)$, can be ignored in what follows. Note first that by the Mean Value Theorem and Assumption \ref{as:momP_AS}, there exists $\tilde{\theta}$ componentwise between $\theta$ and $\theta^*$ s.t.
\begin{multline*}
\frac{E(m_j(X,\theta))}{\sigma_j(\theta)}-\frac{E(m_j(X,\theta^*))}{\sigma_j(\theta^*)} = D_j(\tilde{\theta}) ( \theta-\theta^* )  \\
= (D_j(\theta^*)+O(\Vert \theta-\theta^* \Vert)) ( \theta-\theta^* )  =  O(\Vert \theta-\theta^* \Vert) + O(\Vert \theta-\theta^* \Vert^2).
\end{multline*}
Let $\rho \equiv \max_{j \notin \cJ^*(\theta^*)}E(m_j(X,\theta^*))/\sigma_j(\theta^*)<0$, then it follows that
$$ \max_{\theta \in B(\theta^*,\eta)} \max_{j \notin \cJ^*(\theta)}E(m_j(X,\theta))/\sigma_j(\theta) \leq \rho+\max_{\theta \in B(\theta^*,\eta)} \left\{ \frac{E(m_j(X,\theta))}{\sigma_j(\theta)}-\frac{E(m_j(X,\theta^*))}{\sigma_j(\theta^*)}\right\}<0$$
for a small enough choice of $\eta$. Thus, inactive constraints do not affect the value taken by $\max_{j=1,\dots,J} E(m_j(X,\theta))/\sigma_j(\theta)$ anywhere on $B(\theta^*,\eta)$. (In the remainder of this proof, $\eta$ is understood to be either the $\eta$ just chosen or the $\eta$ from Assumption \ref{as:CHT-degeneracy}, whichever is smaller.)

Fix $\delta>0$ and consider any $\theta \in B(\theta^*,\eta)$ with $\max_{j \in \cJ^*(\theta^*)} E(m_j(X,\theta))/\sigma_j(\theta) > -\delta$. Then
\begin{eqnarray*}
&& \max_{j \in \cJ^*(\theta^*)} E(m_j(X,\theta-\delta t^*/\mu))/\sigma_j(\theta-\delta t^*/\mu) \\
& = & \max_{j \in \cJ^*(\theta^*)}\left\{  E(m_j(X,\theta))/\sigma_j(\theta) -\delta D_j(\bar{\theta}_j) t^*/\mu\right\} \\
& = & \max_{j \in \cJ^*(\theta^*)}\left\{  E(m_j(X,\theta))/\sigma_j(\theta) -\delta D_j(\theta^*) t^*/\mu - o(\delta)\right\} \\
&\geq & -\delta + 2\delta - o(\delta) > 0 
\end{eqnarray*}
for $\delta$ small enough, implying that $\theta-2\delta t^*/\mu \notin \Theta_I$ and therefore that $d(\theta,\Theta \setminus \Theta_I) < 2\delta/\mu$. (Here, $\bar{\theta}_j$ is componentwise between $\theta$ and $\theta-\delta t^*/\mu$ and may change with $j$; $\Theta_I \subset \operatorname{int}\Theta$ ensures $\theta-2\delta t^*/\mu \in \Theta$ for $\delta$ small enough.)

Conversely, by setting $\varepsilon=2\delta/\mu$, we find that (for $\varepsilon$ small enough)
\begin{equation*}
d(\theta,\Theta \setminus \Theta_I) \geq \varepsilon \implies \max_{j \in \cJ^*(\theta^*)} E(m_j(X,\theta))/\sigma_j(\theta) \leq -\mu\varepsilon/2,
\end{equation*}
verifying \eqref{eq:CHT-deg-1} with $C=\mu/2$. The requirement that $\varepsilon$ be small enough can be enforced by choosing $\eta$ low enough.

Next, for any $\theta \in \Theta_I \cap B(\theta^*,\eta)$, we similarly have
\begin{eqnarray*}
&& \max_{j \in \cJ^*(\theta^*)} E(m_j(X,\theta+\delta t^*/\mu))/\sigma_j(\theta+\delta t^*/\mu) \\
&=& \max_{j \in \cJ^*(\theta^*)} \left\{ E(m_j(X,\theta))/\sigma_j(\theta) + \delta D_j(\bar{\theta}_j)t^*/\mu  \right\} \\
&= & \max_{j \in \cJ^*(\theta^*)} \left\{ E(m_j(X,\theta))/\sigma_j(\theta) + \delta D_j(\theta^*)t^*/\mu + o(\delta) \right\} \\
& \leq & 0-2\delta + o(\delta)
\end{eqnarray*}
for $\delta$ small enough. (Again, $\theta+\delta t^*/\mu \in \Theta$ because $\Theta_I \subset \operatorname{int}\Theta$. The interpretation, though not the value taken, of $\bar{\theta}_j$ is as before.) 

Now, set $\overline{M}= \max_{j \in \cJ^*(\theta^*)} \Vert D_j(\theta^*) \Vert$ and let $t$ be any unit vector, then
\begin{eqnarray*}
&& \max_{j \in \cJ^*(\theta^*)} E(m_j(X,\theta+\delta t^*/\mu+\delta t/\overline{M}))/\sigma_j(\theta+\delta t^*/\mu+\delta t/\overline{M}) \\
&=& \max_{j \in \cJ^*(\theta^*)}\left\{ E(m_j(X,\theta+\delta t^*/\mu))/\sigma_j(\theta+\delta t^*/\mu) + \delta D_j(\bar{\theta}_j)t/\overline{M} \right\}  \\
&\leq & -2\delta +\delta + o(\delta) <0
\end{eqnarray*}
for $\delta$ small enough, thus 
\begin{equation*}
B(\theta+\delta t^*/\mu,\delta/\overline{M}) \subseteq \Theta_I \implies \theta+\delta t^*/\mu \in \Theta_I^{-\delta/\overline{M}} \implies d(\theta,\Theta_I^{-\delta/\overline{M}}) \leq \delta /\mu.
\end{equation*}
Setting $\varepsilon=\delta/\overline{M}$, we find $d(\theta,\Theta_I^{-\varepsilon}) \leq \varepsilon \overline{M}/\mu$, verifying \eqref{eq:CHT-deg-2} with $M=\overline{M}/\mu$.

\paragraph{2.}
Suppose that \eqref{eq:MFCQ-like} applies and let $\mu$ and $t^*$ be as in \eqref{eq:def_mu}. Fix any scalar
\begin{equation*}
\gamma \in \left(0,\frac{1}{\mu}\max_{\theta \in B(\theta^*,\eta)} \max_{j \in \cJ^*(\theta^*)} E(m_j(X,\theta))/\sigma_j(\theta) \right],
\end{equation*}
noting that by Assumption \ref{as:momP_AS}(d), the upper bound on $\gamma$ vanishes as $\eta \to 0$. Consider any $\theta \in B(\theta^*,\eta)$ s.t. $\max_{j \in \cJ^*(\theta^*)} E(m_j(X,\theta))/\sigma_j(\theta)<\mu\gamma$. Then by a use of the mean value theorem very similar to preceding displays, $$\max_{j \in \cJ^*(\theta^*)} E(m_j(X,\theta+\gamma t^*))/\sigma_j(\theta+\gamma t^*)<\mu\gamma-2\mu\gamma+o(\gamma)<0$$ for $\gamma$ small enough (which can be ensured by choosing $\eta$ small enough). It follows that $\theta+\gamma t^* \in \operatorname{int}\Theta_I$, hence $d(\theta,\Theta_I)<\gamma$. Conversely, $d(\theta,\Theta_I)=\gamma$ then implies $\max_{j \in \cJ^*(\theta^*)} E(m_j(X,\theta))/\sigma_j(\theta)\geq \mu\gamma$ and therefore $\max_j E(m_j(X,\theta))/\sigma_j(\theta)\geq \mu\gamma$. As $\gamma$ was arbitrary, this verifies Assumption \ref{as:CHT-PM_2} with $C=\mu$.  

\paragraph{3.}
Fix any $t \in \mathcal{L}(\theta^*)$. By differentiability of $E(m_j(X,\cdot))/\sigma_j(\cdot)$ and the definition of $\mathcal{L}(\cdot)$, we then have
\begin{equation*}
\lim \sup \bigl\{ n \times \max\bigl\{ \max_{j \in \mathcal{J}_1^*(\theta^*)} D_j(\theta^*)t/n, \max_{j \in \{J_1+1,...J\} } \vert D_j(\theta^*)t/n\vert \bigr\}\bigr\}  = 0 \text{ as }n \to \infty.
\end{equation*}
But then Assumption \ref{as:CHT-PM_2} implies $n \times d(\theta^*+t/n,\Theta_I) \to 0$. Next, let $\theta_n=\arg \min_{\theta \in \Theta_I}\Vert \theta - (\theta^*+t/n) \Vert$ and therefore $\Vert \theta_n-(\theta^*+t/n) \Vert=d(\theta^*+t/n,\Theta_I)$, then
\begin{multline*}
\lim \frac{\theta_n-\theta^*}{\Vert\theta_n-\theta^* \Vert}= \lim \frac{\theta_n-(\theta^*+t/n)+(\theta^*+t/n)-\theta^*}{\Vert \theta_n-(\theta^*+t/n)+(\theta^*+t/n)-\theta^*  \Vert} \\
=  \lim \frac{n \times (\theta_n-(\theta^*+t/n))+t}{\Vert n \times (\theta_n-(\theta^*+t/n))+t \Vert}=\frac{t}{\Vert t \Vert}, 
\end{multline*}
hence $t \in \mathcal{T}(\theta^*)$.

\paragraph{4.}
Under ACQ, $\cL(\theta^*)=\cT(\theta^*)$, hence Assumptions \ref{as:PPHI-ascent_2} and \ref{as:PPHI-cone_2} are then equivalent.

\paragraph{5.} This holds because $\cT(\theta^*)\subseteq \cL(\theta^*)$.

\paragraph{6.} Suppose the conclusion fails, thus no $d$ gradients of active constraints span $\mathbb{R}^d$. Then there exists a unit vector $t$ s.t. $D_j(\theta^*)t=0$ for all $j \in \mathcal{J}^*(\theta^*)$. This implies $$\min_{t \in \mathbb{R}^d:\Vert t \Vert=1,p't \geq 0} \max_{j \in \mathcal{J}^*(\theta^*)} D_j(\theta^*)t = 0$$ because at least one of $(t,-t)$ is feasible in this minimization problem, contradicting Assumption \ref{as:PPHI-ascent_2}; compare in particular the equivalent representation of this assumption on the right-hand side of \eqref{eq:conetransform}.

\paragraph{7.} Because $p't=0 \Leftrightarrow \theta^*+t \in H(p,\Theta_I)$, the right-hand side of \eqref{eq:conetransform}, and thereby Assumption \ref{as:PPHI-ascent_2}, can be restated as $$\min_{\theta \in H(p,\Theta_I)} \max\bigl\{ \max_{j \in \mathcal{J}_1^*(\theta^*)} D_j(\theta^*)(\theta-\theta^*)/\Vert\theta-\theta^*\Vert, \max_{j \in \{J_1+1,...J\} } \vert D_j(\theta^*)(\theta-\theta^*)\vert/\Vert\theta-\theta^*\Vert \bigr\}  > 0.$$ By continuous differentiability (using arguments very similar to above), this then implies that, for small enough $\eta>0$,
\begin{multline*}
\max\bigl\{0,\max_{j=1,\dots,J_1}E(m_j(X,\theta))/\sigma_j(\theta),\max_{j=J_1+1,\dots,J}\vert E(m_j(X,\theta))/\sigma_j(\theta)\vert\bigr\} \geq C \Vert \theta-\theta^* \Vert \\
\text{for all} ~ \theta \in  B(\theta^*,\eta) \cap H(p,\Theta_I),
\end{multline*}
where $C = \tfrac{1}{2} \min_{t \in \mathbb{R}^d:\Vert t \Vert=1,p't = 0} \max\bigl\{ \max_{j \in \mathcal{J}_1^*(\theta^*)} D_j(\theta^*)t, \max_{j \in \{J_1+1,...J\} }  D_j(\theta^*)t \bigr\}>0.$
\end{proof}

\subsection{Tightness of Theorem \ref{thm:main}} \label{app:examples}

We next clarify that Theorem \ref{thm:main} is tight, i.e., no implication that is not indicated in Figure \ref{fig:flow chart} holds true. This subsection can be skipped without loss of continuity.

Let $m_j(X,\theta) = \mu_j(\theta) +X_j$ for functions $\mu_j:\Theta \mapsto \mathbb{R}$ defined below and suppose that all $X_j$ are standard normal; thus, $E(m_j(X,\theta))/\sigma_j(\theta)=\mu_j(\theta)$.  Also, $\Theta=[-1,1]^2$. All examples are constructed s.t. the support point of interest is $\theta^*=(0,0)$. Direction of projection is $p=(0,1)$ unless explicitly indicated otherwise. There are no equality constraints, so for the purpose of these examples, ``MFCQ and no equalities" is just MFCQ.

\paragraph{Neither Assumption \ref{as:PPHI-cone_2} nor \ref{as:BCS-PM_2} imply either \ref{as:PPHI-ascent_2} or \ref{ACQ} (ACQ).}
\begin{eqnarray*}
\mu_1(\theta) &=& \theta_2^3 - \theta_1 \\
\mu_2(\theta) &=& \theta_2^3 + \theta_1.
\end{eqnarray*}
We start with this example because several others build on it. Qualitatively, it is the left panel of Figure \ref{fig:support point}. The linear cone $\cL(\theta^*)$ is spanned by $\{(0,1),(0,-1)\}$, the tangent cone $\cT(\theta^*)$ is spanned only by $\{(0,-1)\}$. The one direction, $t=(0,-1)$,  that is in $\cT(\theta^*)$ is tangential to all constraints. The example violates ACQ (and all stronger assumptions) as well as \ref{as:PPHI-ascent_2} but fulfills \ref{as:PPHI-cone_2} as well as \ref{as:BCS-PM_2}.  

\paragraph{Assumption \ref{as:CHT-PM_2} does not imply \ref{MFCQ} (MFCQ).}
\begin{eqnarray*}
\mu_1(\theta) &=& \theta_2^3 - \theta_1 \\
\mu_2(\theta) &=& \theta_2^3 + \theta_1 \\
\mu_3(\theta)&=&\theta_2.
\end{eqnarray*}
This example adds a third constraint to the first example; compare the right panel of Figure \ref{fig:support point}. Both $\cL(\theta^*)$ and $\cT(\theta^*)$ are spanned by $(0,-1)$. The example therefore fulfills \ref{as:CHT-PM_2} (and by implication ACQ) but not MFCQ.

\paragraph{Assumption \ref{ACQ} (ACQ) does not imply \ref{as:CHT-PM_2}.}
\begin{eqnarray*}
\mu_1(\theta) &=& \theta_1^2(\theta_2+\theta_1^2) \\
\mu_2(\theta) &=& \theta_2.
\end{eqnarray*}
Here, $\cL(\theta^*)=\cT(\theta^*)=\{t:p't\leq 0\}$, so that ACQ is fulfilled. However, \ref{as:CHT-PM_2} is violated in direction $t=(1,0)$:
\begin{eqnarray*}
\mu_1(\gamma t) &=& \gamma^4 \\
d(\gamma t,\Theta_I) &\leq & \gamma^2~~~~~\text{(because}~~(\gamma,-\gamma^2)\in \Theta_I\text{)} \\
\implies \mu_1(\gamma t)/d(\gamma t,\Theta_I) & \to & 0~~~~~\text{as}~~\gamma\to 0.
\end{eqnarray*}

\paragraph{Assumption \ref{as:PPHI-ascent_2} does not imply \ref{ACQ} (ACQ).}
\begin{eqnarray*}
\mu_1(\theta) &=& -\theta_1+\theta_2 \\
\mu_2(\theta) &=& \theta_1+\theta_2 \\
\mu_3(\theta) &=& (\theta_1 \theta_2)^2.
\end{eqnarray*}
In this example (which is inspired by a well-known counterexample to ACQ), we have that $\cL(\theta^*)$ is spanned by $\{(-1,-1),(1,-1)\}$ but $\cT(\theta^*)$ is spanned by $(0,-1)$ only.

\paragraph{Assumption \ref{as:BCS-PM_2} does not imply \ref{as:PPHI-ascent_2} or \ref{as:PPHI-cone_2}.}
\begin{eqnarray*}
\mu_1(\theta) &=& \theta_2 \\
\mu_2(\theta) &=& \theta_1+\theta_2.
\end{eqnarray*}
In this example, $\cL(\theta^*)$ and $\cT(\theta^*)$ coincide and are spanned by $\{(-1,0),(1,-1)\}$, contradicting both \ref{as:PPHI-ascent_2} and \ref{as:PPHI-cone_2}. Assumption \ref{as:BCS-PM_2}, which here only applies if we move in direction $(1,0)$ from $\theta^*$, is fulfilled.

\subsection{Discussion}

Our findings inform a number of clarifying remarks on the existing literature.
\begin{itemize}
\item As mentioned above, CHT's polynomial minorant can be disentangled into a local and a global identification condition. The local condition \eqref{eq:CHT-orig-local} is a mild strengthening of ACQ and is implied by degeneracy. The global condition \eqref{eq:CHT-orig-global} is essentially the weakest additional statement needed to ensure that $\Theta_I$ is a well-separated (if set-valued) minimum of $\max\bigl\{\max_{j=1,\dots,J_1} E(m_j(X,\theta))/\sigma_j(\theta), \max_{j=J_1+1,\dots,J} |E(m_j(X,\theta))/\sigma_j(\theta)|,0\bigr\}$.\footnote{A well-separated minimum requires that for each $\varepsilon>0$, there exists $\delta>0$ s.t. $d(\theta,\Theta_I) \geq \varepsilon$ implies $\max\{\dots\} \geq \delta$. Its role in ensuring ``inner consistency" of sample analogs of identified sets is well understood \citep{Newey_McFadden1994a}.} While the polynomial minorant condition is, therefore, not redundant, an instructive restatement of the assumptions is available.   
\item Regarding assumptions in PPHI, claim 5 of Theorem \ref{thm:main} owes to our simplification, but in view of the smoothness imposed in their Assumption 7, claim 4 also applies to the original versions.

Also, if $J=d$, then the PPHI assumptions imply LICQ. This clarifies relation to recent work by \cite{ChoRussell} and \cite{Gafarov}: Both effectively impose LICQ and benefit from this by being able to propose relatively simple inference. However, while stronger than assumptions in CHT, BCS, and certainly \cite{KMS}, the assumptions exceed those in PPHI only in the sense of excluding ``overidentified" support points, i.e. more than $d$ active constraints, and are actually weaker in other respects.
\item \cite{ChoRussell} furthermore impose Assumption \ref{as:CHT-PM}, i.e. (by Theorem \ref{thm:main}) ACQ. This is not redundant in their paper because \ref{as:CHT-PM} is imposed for all $\theta \in \Theta_I$ and with universal $C$, i.e. ``more uniformly" than LICQ.
\item \cite{Yildiz2012} presents conditions for Hausdorff consistency of $\hat{\Theta}_I$. Some of these are in essence constraint qualifications and can be related to our analysis as follows. For the case of pure inequality constraints, her high-level Assumption 3.1 states that $\Theta_I$ is the closure of the \textit{strict} level $0$ lower contour set of $\max_j E(m_j(X,\theta))/\sigma_j(\theta)$.\footnote{\cite{Molchanov1998} also uses this condition to ensure Hausdorff consistency of set estimators.} For the case of at most $d$ active constraints, this is derived as an implication of the LICQ (Assumption 3.2). For the more general case, it is derived from an assumption (in Lemma 3.1) enforcing that, at any boundary point $\theta^*$ of $\Theta_I$, the criterion function $\max_j E(m_j(X,\theta))/\sigma_j(\theta)$ is strictly increasing in some component of $\theta$. To make it comparable to our assumptions, one would impose this to hold at any support point $\theta^*$. By a minimal extension of step 2 of Theorem \ref{thm:main} (see expression \eqref{eq:MFCQ-like}), it is then equivalent to Assumption \ref{as:CHT-degeneracy}. Therefore, \cite{Yildiz2012} essentially imposes MFCQ for the pure inequality case.\footnote{\cite{Yildiz2012} writes that her assumptions imply a degeneracy condition in CHT without claiming the reverse implication. This refers to their high-level degeneracy assumption C.3, which is implied whenever the simple sample analog of $\Theta_I$ is consistent.} For the case of mixed equalities and inequalities, she invokes a LICQ (Assumption 4.1(b)).
\item We close with some remarks on why, for inference on projections $p'\theta$, certain approaches do not require constraint qualifications. Specifically, the profiling approach in BCS gets by with the relatively weak Assumption  \ref{as:BCS-PM}; \cite{KMS} or projection of confidence regions for $\theta$ (such as those in \cite{AndrewsSoares2010E}, \cite{Bugni2009E}, or \cite{Canay2010JE}) use no shape restrictions for $\Theta_I$ at all.

The reason is that all of these approaches localize inference at a conjectured true value of the support function $s(p,\Theta_I)$ (in BCS) or parameter vector $\theta$ (in all others). Consistent estimation of identified sets for these objects is then not a concern. In particular, BCS need to ensure some form of consistency of a sample analog of $\Theta_I$ that is restricted to the true supporting hyperplane, and this is precisely what the minorant on the support plane achieves. The other approaches need no constraint qualifications at all.

Of course, there is no free lunch. All the methods just alluded to are computationally expensive because they effectively invert tests whose critical value depends on the exact value of the parameter under the null. Thus, while some shortcuts may be available in practice (see in particular \cite{KMS} and \cite{KMST_code}), critical values must in principle be recomputed at every conceivable value of $\theta$ or $s(p,\Theta)$.
\end{itemize}
\newpage

\section{Verifying Assumptions in Some Examples}
\label{sec:examples_verify_assumptions}

In this section, we discuss how these restrictions apply to two well-understood examples of partial identification. It will become clear that one cannot take for granted that they ``typically" hold and that verifying them could be intricate in more involved examples. That is part of our message: We point out that many of them are basically constraint qualifications, and it is well known that constraint qualifications can be subtle to verify.

The examples are visualized in Figure \ref{fig:examples2}, which is designed to resemble Figure \ref{fig:cones}. Note that in the second example, $\Theta_I$ has zero measure.

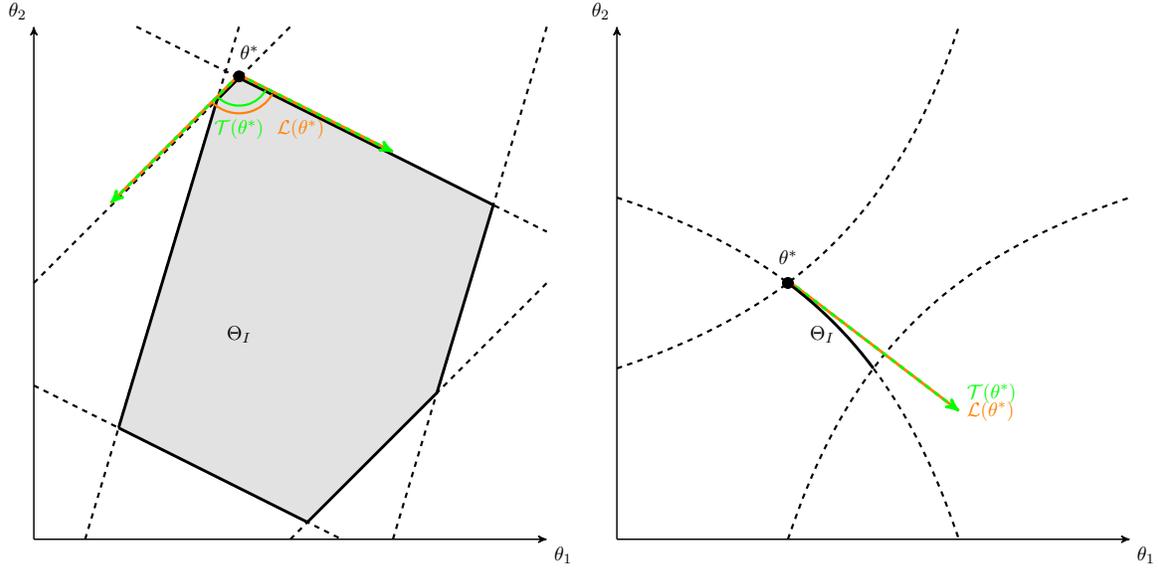
\begin{figure}[t]
\begin{subfigure}[b]{.5\linewidth}
\begin{adjustbox}{max width=\textwidth}
\tikzstyle{line} = [draw, -latex']
\begin{tikzpicture}[node distance = 2cm, auto]
\linespread{0.9}

\draw [->, thick] (0,0) to (10,0) node[below right] {$\theta_1$};
\draw [->, thick] (0,0) to (0,10) node[above left] {$\theta_2$};
\draw[dashed,very thick] (1,0) to (4,10);
\draw[dashed,very thick] (7,0) to (10,10);
\draw[dashed,very thick] (0,3) to (6,0);
\draw[dashed,very thick] (2,10) to (10,6);
\draw[dashed,very thick] (0,5) to (5,10);
\draw[dashed,very thick] (5,0) to (10,5);
\path [fill=darkgray!15] (38/23,50/23) to (25/7,60/7) to (4,9) to (206/23,150/23) to (55/7,20/7) to (16/3,1/3) to (38/23,50/23);
\draw[black,ultra thick] (38/23,50/23) to (25/7,60/7) to (4,9) to (206/23,150/23) to (55/7,20/7) to (16/3,1/3) to (38/23,50/23);
\draw[-,ultra thick,orange] (4,9.06) to (7,7.56);
\draw[->,dashed,ultra thick,green] (4,9.06) to (7,7.56);
\draw[-,ultra thick,orange] (4,9.06) to (1.5,6.56);
\draw[->,dashed,ultra thick,green] (4,9.06) to (1.5,6.56);
\centerarc[green, very thick](4,9.06)(225:330:.6);
\centerarc[orange, very thick](4,9.06)(225:330:.75);
\filldraw [black] (4,9.03) circle (3pt);
\node (theta^*) at (4.2,9.5) {$\theta^*$};
\node (theta_I) at (4,4) {$\Theta_I$};
\node[green] (tangent) at (4,8) {$\cT(\theta^*)$};
\node[orange] (linear) at (5.2,8) {$\cL(\theta^*)$};
\end{tikzpicture}
\end{adjustbox}
               \end{subfigure}%
               \begin{subfigure}[b]{.5\linewidth}
               \begin{adjustbox}{max width=\textwidth}
\tikzstyle{line} = [draw, -latex']
\begin{tikzpicture}[node distance = 2cm, auto]
\linespread{0.9}

\draw [->, thick] (0,0) to (10,0) node[below right] {$\theta_1$};
\draw [->, thick] (0,0) to (0,10) node[above left] {$\theta_2$};
\draw[domain=0:20/3,smooth,variable=\x,dashed,very thick] plot ({\x},{10-10/(3-.3*\x)});
\draw[domain=0:20/3,smooth,variable=\x,dashed,very thick] plot ({\x},{10/(3-.3*\x)});
\draw[domain=10/3:10,smooth,variable=\x,dashed,very thick] plot ({\x},{10-10/(.3*\x)});
\draw[-,ultra thick,orange] (10/3,5.05) to (20/3,2.505) node[right] {$\cL(\theta^*)$};
\draw[->,dashed,ultra thick,green] (10/3,5.05) to (20/3,2.505) node[above right] {$\cT(\theta^*)$};
\draw[domain=10/3:5,smooth,variable=\x,ultra thick] plot ({\x},{10-10/(3-.3*\x)}) ;
\filldraw [black] (10/3,5) circle (3pt);
\node (theta^*) at (10/3,5.5) {$\theta^*$};
\node (theta_I) at (4,4) {$\Theta_I$};
\end{tikzpicture}
\end{adjustbox}
\end{subfigure}
\caption{Visualization of the two examples from Section \ref{sec:examples_verify_assumptions}. Dashed lines denote (in)equality restrictions. The figures also display the identified set $\Theta_I$, the support point $\theta^*$ in direction $p=(0,1)$ (i.e. ``up," as in previous figures), and the associated linear and tangent cones (which coincide). Left panel: Linear regression with interval outcome data (Section \ref{sec:MD}). Right panel: A simple entry game (Section \ref{sec:EG}).}
\label{fig:examples2}
\end{figure}

\subsection{Linear regression with interval outcome data and discrete regressors}
\label{sec:MD}	
Consider a linear regression model:
\begin{align*}
	W=Z^\prime \theta +u,
\end{align*}
where $Z=(Z_1,\dots,Z_d)$ is a $d\times 1$ random vector with $Z_1=1$. We assume that $Z$ has $k$ points of support denoted $\mathcal Z=\{z^1,\dots,z^k \in \R^d\}$ with $\max_{r=1,\dots,k}\Vert z^r \Vert<M<\infty$. The researcher observes $\{W_0,W_1,Z\}$ with $P(W_0\le W \le W_1|Z=z^r)=1,r=1,\dots,k$, where $\pi^r=P(Z=z^r)>0,r=1,\dots,k$ are assumed known. Suppose that $W_0$ and $W_1$ take values in $\mathcal W\subset\mathbb R$. An important special case is missing data, where $W_0$ and $W_1$ both equal the true $W$ if the latter is observed and correspond to some bound on it otherwise.

The identified set is characterized by the following moment inequalities.
\begin{align*}
E(W_0|Z=z^r)-z^{r\prime} \theta &\le 0,~r=1,\dots,k,\\
z^{r\prime} \theta-E(W_1|Z=z^r)& \le 0,~ r=1,\dots,k.
\end{align*}
Equivalently,
\begin{align}
E(W_01\{Z=z^r\})/\pi^r-z^{r\prime} \theta& \le 0,~ r=1,\dots,k,\label{eq:linear1}\\
z^{r\prime} \theta-E(W_11\{Z=z^r\})/\pi^r& \le 0,~ r=1,\dots,k.\label{eq:linear2}
\end{align}
Define the following objects.
\begin{align*}
\sigma_j&=\begin{cases}
		\operatorname{Var}(W_01\{Z=z^j\})^{1/2}/\pi^j,&~j=1,\dots,k\\
		\operatorname{Var}(W_11\{Z=z^{j-k}\})^{1/2}/\pi^{j-k},&~j=k+1,\dots,2k,
	\end{cases}\\
	D_j&=\begin{cases}
		-z^{j\prime}/\sigma_j,&~j=1,\dots,k\\
		z^{j-k\prime}/\sigma_j,&~j=k+1,\dots,2k,
	\end{cases}\\
	b_j&=\begin{cases}
	-\frac{E(W_0|Z=z^j)}{\operatorname{Var}(W_0|Z=z^j)^{1/2}}	&~j=1,\dots,k\\
	\frac{E(W_1|Z=z^{j-k})}{\operatorname{Var}(W_1|Z=z^{j-k})^{1/2}}	&~j=k+1,\dots,2k.
	\end{cases}
\end{align*}
Note that, since $D_j$ does not depend on $\theta$, we drop its argument.

By \eqref{eq:linear1}-\eqref{eq:linear2}, the identified set is a polytope characterized by $k$ pairs of parallel constraints
	\begin{align}
	E(W_01\{Z=z^r\})/\pi^r\le	z^{r\prime}\theta \le \theta-E_P(W_11\{Z=z^r\})/\pi^r,~r=1,\dots,k. \label{eq:response}
	\end{align}
See the left panel of Figure \ref{fig:examples2} for an illustration with $k=3$. Each support point is either a vertex of the polytope or a point in the relative interior of a non-singleton support set. We therefore analyze subcases below. Note that, since the gradients $z^1,\dots,z^k$ of the constraints in \eqref{eq:response} are known, one knows without data which case applies.

We first establish conditions implying MFCQ. For $j=1,\dots,2k$, define:
\begin{align*}
	H_j&=\{\theta\in\Theta:D_j\theta=b_j\}\\
	H_j^-&=\{\theta\in\Theta:D_j\theta\le b_j\}.
\end{align*}
We call $H_j$ a \emph{hyperplane} and $H_j^-$ a \emph{half space}. Let $\operatorname{ri}(A)$ denote the relative interior of $A$. A $(d-1)$-dimensional flat face in $\mathbb R^d$ is called a \emph{facet}. Similarly, a $\ell$-dimensional element of a $d$-dimensional polytope is called a $\ell$-face, where $1\le\ell\le d-2$. For example, a $1$-face is an edge in a 3 dimensional polytope.
\begin{lemma}\label{lem:mfcq1}
	Suppose (i) $\mathcal W$ is compact; (ii) $\Theta=\{\theta\in\mathbb R^d:\|\theta\|^2\le B_0\}$ with $B_0<\infty$ satisfying $C_0B_0>k\sup_{w\in\mathcal W}w^2$, where $C_0\equiv \inf_{p\in\mathbb S^{d-1}}\sum_r (p'z^r)^2$; (iii) $k\ge d$, and any subset $\mathcal C\subseteq \mathcal Z$ with $\#\mathcal C \le d$ is linearly independent; (iv) $E[W_1-W_0|Z=z^r]>0$ for all $r=1,\dots,k.$ Let $\theta^*\in S(p,\Theta_I)$. 
	\begin{enumerate}
		\item (Facet) If $S(p,\Theta_I)$ is not a singleton and is the intersection of a hyperplane and a finite number of half spaces,  MFCQ holds at any $\theta^*\in \text{ri}(S(p,\Theta_I))$;
		\item ($\ell$-face) If $S(p,\Theta_I)$  is not a singleton and is  the intersection of more than one hyperplanes, MFCQ holds at any  $\theta^*\in \text{ri}(S(p,\Theta_I))$;
		\item (Vertex) If $S(p,\Theta_I)$ is a singleton and $\#\mathcal J^*(\theta^*)\le d$, MFCQ holds at $\theta^*$.
	\end{enumerate} 
\end{lemma}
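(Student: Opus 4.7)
The plan is to reduce all three cases to a single unified argument: show that the gradients of active constraints at $\theta^*$ are linearly independent, which yields MFCQ (and indeed LICQ). The key preparatory observation, used throughout, comes from condition (iv): for each $r \in \{1,\dots,k\}$, at most one of the constraints indexed by $j=r$ and $j=r+k$ can be active at any $\theta \in \Theta_I$, since both being tight would force $z^{r\prime}\theta = E(W_0|Z=z^r) = E(W_1|Z=z^r)$, contradicting $E(W_1 - W_0 \mid Z = z^r) > 0$. Consequently, the $r$-indices associated with the elements of $\mathcal{J}^*(\theta^*)$ are all distinct, so by condition (iii) the gradients $\{D_j : j \in \mathcal{J}^*(\theta^*)\}$ are linearly independent whenever $\#\mathcal{J}^*(\theta^*) \le d$.

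I would then dispatch the two easier cases quickly. In Case 1 (facet), $S(p,\Theta_I) = H_{j^*} \cap \bigcap_m H_m^-$; at $\theta^* \in \operatorname{ri}(S(p,\Theta_I))$ the half-space constraints are strict, so $\mathcal{J}^*(\theta^*) = \{j^*\}$ is a singleton and MFCQ holds trivially by choosing $t = -D_{j^*}^\top$, which gives $D_{j^*} t = -\|D_{j^*}\|^2 < 0$. In Case 3 (vertex), the hypothesis $\#\mathcal{J}^*(\theta^*) \le d$ is given, so the preparatory step together with condition (iii) deliver linear independence of the active gradients directly, hence LICQ and hence MFCQ.

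Case 2 ($\ell$-face) is the main obstacle and requires two sub-steps. First, I would identify the active set at $\theta^* \in \operatorname{ri}(S(p,\Theta_I))$ via a convexity argument: any constraint active at $\theta^*$ must be tight throughout $S(p,\Theta_I)$, because otherwise moving from $\theta^*$ in the direction opposite to a slack point of $S(p,\Theta_I)$, while remaining in the relative interior, would violate that constraint. Hence $\mathcal{J}^*(\theta^*) = \{j_1,\dots,j_\ell\}$ coincides with the indices of the defining hyperplanes of $S(p,\Theta_I)$. Second, I would bound $\ell \le d$ using the non-singleton hypothesis: the preparatory step gives distinct $r$-indices, and by (iii) any $d$-subset of the corresponding $z^r$'s is linearly independent. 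If $\ell > d$, then any $d$ of the hyperplanes $H_{j_i}$ would already have a singleton intersection, contradicting that $S(p,\Theta_I)$ is not a singleton. With $\ell \le d$ and distinct $r$-indices, condition (iii) yields linear independence of $\{D_{j_1},\dots,D_{j_\ell}\}$, and MFCQ follows, e.g., by solving the underdetermined linear system $D_{j_i} t = -1$ for $i = 1,\dots,\ell$. The hard part is precisely the combination of this convex-geometry identification of the active set with the dimension bound on $\ell$; once both are in place, extracting MFCQ from linear independence of the active gradients is routine.
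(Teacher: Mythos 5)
Your proposal is correct, and Cases 1 and 3 track the paper's proof closely: the paper likewise reduces the facet case to a singleton active set (using condition (iv) to kill the paired lower bound and relative interiority to kill the rest) and disposes of the vertex case by observing that $\#\mathcal J^*(\theta^*)\le d$ plus condition (iii) gives LICQ. Where you genuinely depart from the paper is Case 2. The paper argues by contradiction through Karush--Kuhn--Tucker machinery: it selects a $d$-element subset $\tilde{\mathcal J}$ of the active set, invokes the first-order condition $p+\sum_{j\in\tilde{\mathcal J}}\lambda_j D_j'=0$ with $\lambda_j>0$, and uses $p'(\tilde\theta-\theta^*)=0$ for a second support point $\tilde\theta$ together with the sign of the multipliers to force $D_j(\tilde\theta-\theta^*)=0$ for all $j\in\tilde{\mathcal J}$, contradicting uniqueness of the solution of that linear system. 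Your route replaces this with a purely convex-geometric argument: since the constraints are affine, any constraint active at a relative interior point of $S(p,\Theta_I)$ is tight on all of $S(p,\Theta_I)$, and then the dimension count (distinct $z^r$'s by (iv), linear independence of any $\le d$ of them by (iii), so $d$ common hyperplanes would collapse $S(p,\Theta_I)$ to a point) yields $\#\mathcal J^*(\theta^*)<d$ directly. This is arguably cleaner, since it avoids having to justify strict positivity of the Lagrange multipliers on an arbitrarily selected subset of the active constraints --- a step the paper asserts rather than proves. Two small housekeeping points: your argument, like the paper's, should note where conditions (i) and (ii) enter, namely to guarantee $\Theta_I\neq\emptyset$ and $\Theta_I\subset\operatorname{int}(\Theta)$ (the paper cites Proposition F.1 of Kaido and Santos for this), so that the ball constraint defining $\Theta$ is never active; and in Case 1 your assertion that the half-space constraints are strict at $\theta^*\in\operatorname{ri}(S(p,\Theta_I))$ deserves the one extra line you already have the tools for --- any other active constraint would have to contain the $(d-1)$-dimensional facet, forcing its normal to be proportional to $z^{r}$, which (iii) and (iv) jointly exclude.
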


\begin{remark}
	The lemma shows that, even if one is not sure about whether $\theta^*$ is on a facet or any other lower dimensional elements of the polytope, MFCQ holds as long as $\#\mathcal J^*(\theta^*)\le d$ (and other conditions of the lemma hold).
	Also, things simplify if $k=d$, in which case conditions (iii) and (iv) imply $\#{\mathcal J}^*(\theta^*)\le d$ at any support point. Condition (iii) then ensures LICQ, and therefore MFCQ, at $\theta^*$.
\end{remark}

\begin{proof}
Under our assumptions, $\Theta_I$ is nonempty and is in the interior of $\Theta$ by Proposition F.1 in \cite{Kaido:Santos}.

\medskip
\noindent
\textbf{Case 1:} Suppose $S(p,\Theta_I)$ is a flat face and $\theta^*\in \operatorname{ri}(S(p,\Theta_I))$. This occurs if and only if $p=cz^r$ (or $p=-cz^r$) for some $r\in\{1,\dots, k\}$ and $c>0$, and $p'\theta^*=E(W_11\{Z=z^r\})/\pi^r$ (or $E(W_01\{Z=z^r\})/\pi^r$). By (iii), such $r$ is unique.
 Without loss of generality, suppose $p'\theta^*=E(W_11\{Z=z^r\})/\pi^r$. The assumption $E[W_1-W_0|Z=z^r]>0$ ensures
\begin{multline}
E(W_11\{Z=z^r\})/\pi^r-E(W_01\{Z=z^r\})/\pi^r  \\
=E[E[W_1-W_0|Z=z^r]1\{Z=z^r\}]/\pi^r>0.\label{eq:mfcq1_1}
\end{multline}
Hence, $p'\theta^*>E(W_0 1\{Z=z^r\})/\pi^r$, implying the lower bound is slack.
This and $\theta^*\in \operatorname{ri}(S(p,\Theta_I))$ imply $\mathcal J^*(\theta^*)=\{j^*\}$ with $j^*=r+k$. 

The gradient of the normalized moment is
\begin{align*}
	D_{j^*}=\frac{z^{r\prime}}{\sigma_{j^*}}=\frac{z^{r\prime}}{\operatorname{Var}(W_11\{Z=z^r\})^{1/2}/\pi^r}.
\end{align*}
Let $t^*=-z^r$, then 
\begin{align*}
	D_{j^*}t^*=-\frac{z^{r\prime}z^r}{\operatorname{Var}(W_11\{Z=z^r\})^{1/2}/\pi^r}<0,
\end{align*}
which establishes MFCQ.

\medskip
\noindent
\textbf{Case 2:} We first claim that $\#\mathcal J^*(\theta^*)<d$. Suppose otherwise. Then, there are at least $d$ active inequalities at $\theta^*$. Select a subset $\tilde{\mathcal J}\subset \mathcal J(\theta^*)$ such that $\#\tilde{\mathcal J}=d$. Then, by condition (iii), $\{D_j,j\in\tilde J\}$ are linearly independent, implying $\theta^*$ is a unique solution to the system of  linear equations 
\begin{align*}
	D_j\theta=b_j,~\forall j\in\tilde{\mathcal J}.
\end{align*}
Furthermore, by the necessary condition of the maximization problem, there is $\lambda\in\mathbb R^{2k}_+$ such that
\begin{align}
	p+\sum_{j\in\tilde{\mathcal J}}\lambda_j D_j'=0,\label{eq:foc1}
\end{align}
where $\lambda_j>0$ for all $j\in\tilde{\mathcal J}$.

Let $\tilde\theta\in S(p,\Theta_I)$ and $\tilde\theta\ne\theta^*$. By construction, $p'(\tilde\theta-\theta^*)=0$.
This and \eqref{eq:foc1} imply
\begin{align}
	\sum_{j\in\tilde{\mathcal J}}\lambda_j D_j(\tilde\theta-\theta^*)=0.\label{eq:foc2}
\end{align}
Suppose that $D_j(\tilde\theta-\theta^*)>0$ for some $j\in\tilde{\mathcal J}$. This implies
\begin{align*}
	D_j\tilde\theta>b_j,
\end{align*}
hence the $j$-th constraint is violated, hence $\tilde\theta$ cannot be in $S(p,\Theta_I)$, a contradiction. Similarly, suppose that $D_j(\tilde\theta-\theta^*)<0$ for some $j\in\tilde{\mathcal J}$, then by \eqref{eq:foc2} and the positivity of the Lagrange multipliers, there must exist $j'\in\tilde{\mathcal J}$ such that $D_{j'}(\tilde\theta-\theta^*)>0$, and the same argument applies. The only remaining possibility is $D_j(\tilde\theta-\theta^*)=0$ for all $j\in\tilde{\mathcal J}$, in which case
\begin{align*}
	D_j\tilde\theta=b_j,~\forall j\in\tilde{\mathcal J},
\end{align*}
which contradicts  $\theta^*$ being the unique solution to \eqref{eq:foc1}. Therefore, $\#\mathcal J^*(\theta^*)<d$ must hold.
Now, by condition (iii) and $\#\mathcal J^*(\theta^*)<d$,  LICQ holds at $\theta^*$, which implies the claim.

\medskip
\noindent
\textbf{Case 3:} By $\#\mathcal J^*(\theta^*)\le d$ and condition (iii), LICQ holds at $\theta^*$, which implies the claim.

\end{proof}

We conclude by mentioning other conditions. The identified set $\Theta_I$ is a finite polytope, hence any projection is the value of a linear program. This immediately clarifies that ACQ obtains, which furthermore means that Assumptions \ref{as:PPHI-cone} and \ref{as:PPHI-ascent} coincide. Both restrict $\cL(\theta^*)$, which locally just coincides with $\Theta_I$, to not intersect the supporting hyperplane other than at $\theta^*$. This will hold iff the support set is a singleton, i.e. a vertex. Finally, the solution to a linear program is necessarily well-separated, so that Assumption \ref{as:BCS-PM} holds.

\subsection{A Simple Entry Game}
\label{sec:EG}

Two-player entry games are a workhorse example of partial identification since \cite{Tamer03}. We analyze the game specified in BCS but without covariates. The equilibrium concept is pure strategy Nash equilibrium (PSNE). Firm $1$ respectively $2$ enters if
\begin{eqnarray*}
\varepsilon_1 - \theta_1 A_2  &\geq &  0 \\
\varepsilon_2-\theta_2 A_1  &\geq &  0,
\end{eqnarray*} 
where $(A_1,A_2)\in\{0,1\}^2$ are the firms' actions, $(\theta_1,\theta_2)\in [0,1]^2$ are the interaction parameters, and $(\varepsilon_1,\varepsilon_2)$ are observed by the players but not by the researcher. This system is incomplete: For certain realization of $(\varepsilon_1,\varepsilon_2)$, both $(1,0)$ and $(0,1)$ are PSNE and the model is silent on which is played. Hence, different (stochastic) selection mechanisms picking an equilibrium in the region of multiplicity can be coupled with different values of $\theta$ to yield the same distribution of observables $(A_1,A_2)$.
Nonetheless, inference can be carried out by bounding the likelihood of different outcomes from below by the respective probabilities of them being unique PSNE. In this simple example, this exhausts all the information in the model and data, leading to a sharp identification region. See \cite{BMM_ECMA} for characterizations of sharp identification regions in more complex models.

In this model,
\begin{itemize}
\item $(1,1)$ is the unique PSNE if $\varepsilon_1-\theta_1 A_2  \geq 0$ and $\varepsilon_2-\theta_2 A_1 \geq 0$,
\item $(1,0)$ is the unique PSNE if $\varepsilon_1-\theta_1 A_2  \geq 0$ and $\varepsilon_2-\theta_2 A_1 \leq  0$,
\item $(0,1)$ is the unique PSNE if $\varepsilon_1-\theta_1 A_2  \leq  0$ and $\varepsilon_2-\theta_2 A_1  \geq  0$.
\end{itemize}
Letting $\pi_{jk}=\Pr(A_1=j,A_2=k)$ and assuming (as in BCS) that $(\varepsilon_1,\varepsilon_2)$ are distributed independently uniformly on $[0,1]$, we have
\begin{eqnarray}
\pi_{11} &=& (1-\theta_1)(1-\theta_2)  \label{eq:cond1} \\
\pi_{10} &\geq & (1-\theta_1)\theta_2 \label{eq:cond2} \\
\pi_{01} &\geq & \theta_1(1-\theta_2) \label{eq:cond3} \\
\pi_{00} &=& 0.
\end{eqnarray}
Geometrically, the identified set $\Theta_I$ is an arc segment. The right panel of Figure \ref{fig:examples2} depicts its true shape if $\pi_{01}=\pi_{10}=\pi_{11}=1/3$.

Consider now the problem of bounding $\theta_2$ from above; all other bounds on individual parameters are similar. Guessing that \eqref{eq:cond1}-\eqref{eq:cond2} will bind at the solution, one can easily solve for
\begin{equation*}
\theta^*=\left[\begin{array}{c} \theta_1^* \\ \theta_2^* \end{array}\right]
=\left[\begin{array}{c} \pi_{01} \\ \pi_{10}/(\pi_{10}+\pi_{11}) \end{array}\right].
\end{equation*}
Furthermore, the linear and tangent cone are tangent to \eqref{eq:cond1} and are spanned by $(\pi_{10}+\pi_{11},-\pi_{11}/(\pi_{10}+\pi_{11}))$. This vector has strictly negative inner product with $p=(0,1)$. We conclude:
\begin{itemize}
\item Assumption \ref{as:CHT-degeneracy} and any equivalent assumption cannot hold because $\Theta_I$ has no interior.
\item All of Assumptions \ref{as:CHT-PM}, \ref{as:BCS-PM}, \ref{as:PPHI-cone}, \ref{as:PPHI-ascent}, LICQ, and ACQ hold.
\end{itemize}
However, fulfillment of some of these assumptions is delicate in that it depends on $\theta^*$ being an exposed point of $\Theta_I$. Other directions of optimization (e.g., $p=(1,-1)$, corresponding to testing the null that $\theta_1=\theta_2$) have $\theta^*$ in the relative interior of $\Theta_I$, i.e. at a point where only \eqref{eq:cond1} is active. Assumptions \ref{as:BCS-PM}, \ref{as:PPHI-cone}, and \ref{as:PPHI-ascent} will then fail.

\section{Conclusion}
\label{sec:conclusion}

The literature on partial identification uses constraint qualifications in many ways: To ensure Hausdorff consistency or rates of convergence for simple estimators of identified sets \citep[CHT; ][]{Yildiz2012}, to justify inference for the full parameter vector $\theta$ (CHT) or subvectors \citep[PPHI; ][]{ChoRussell,Gafarov}, or to justify efficiency bounds \citep{Kaido:Santos}. However, some of these uses are implicit, making it difficult even for expert readers to compare assumptions. We provide a guide to how different high-level assumptions relate to each other and to well-known constraint qualifications. A simple, important message is that several high-level assumptions are tightly related to the Mangasarian-Fromowitz constraint qualification and are essentially mutually equivalent. We believe that this provides useful guidance to readers trying to make sense of the large menu of inference methods for partially identified vectors and subvectors \citep{CS17,MolinariHOE}. For example, it clarifies costs and benefits relative to work that has weak-to-no geometric regularization, mostly at the expense of computational effort.

%

\bibliography{KMS,projection}

\end{document}